\newif\ifdraft\draftfalse
\newtheorem{theorem}{Theorem}
\newtheorem{lemma}{Lemma}
\newtheorem{definition}{Definition}
\newtheorem{example}{Example}
\newcommand{\rmq}[2]{\mathtt{rmq}(#1,#2)}
\newcommand{\lca}[2]{\mathtt{lca}(#1,#2)}
\newcommand{\rmqone}[2]{\mathtt{rmq1}(#1,#2)}
\newcommand{\substr}[3]{#1[#2:#3]}
\newcommand{\sublen}[3]{#1\langle #2,+#3\rangle}
\newcommand{\sublastlen}[3]{#1\langle #2,-#3\rangle}
\newcommand{\prefix}[2]{#1[:#2]}
\newcommand{\suffix}[2]{#1[#2:]}
\newcommand{\len}[1]{|#1|}
\newcommand{\reverse}[1]{{#1}^\mathrm{R}}
\newcommand{\match}[3]{\mathit{Match}({#1}, {#2}, {#3})} 
\newcommand{\co}{C}
\newcommand{\cop}[2]{\co[#1,#2]}
\newcommand{\cmaxijk}[3]{\mathit{M}[#1,#2]}
\newcommand{\opLCE}[2]{\mathit{opLCE}[#1, #2]}
\newcommand{\opLCEps}[4]{\mathit{opLCE}_{#3,#4}[#1, #2]}
\newcommand{\opL}{\mathit{opLCE}}
\newcommand{\opLq}[2]{\mathtt{opLCE}(#1, #2)}
\newcommand{\A}{A_{i,j}}
\newcommand{\orderisomorphic}{\approx}
\newcommand{\stTop}[1]{{#1}.\mathtt{top}()}
\newcommand{\stPop}[1]{{#1}.\mathtt{pop}()}
\newcommand{\stPush}[2]{{#1}.\mathtt{push}(#2)}
\newcommand{\problemabbr}{{LCS$_{k^{+}}$}}
\newcommand{\problemabbrk}[1]{{LCS$_{#1^{+}}$}}
\newcommand{\opproblem}{longest common subsequence in at least $k$ length order-isomorphic substrings}
\newcommand{\opproblemabbr}{{op-LCS$_{k^{+}}$}}
\newcommand{\opproblemabbrk}[1]{{op-LCS$_{#1^{+}}$}}
\newcommand{\Prev}[1]{\mathit{Prev}_{#1}}
\newcommand{\Next}[1]{\mathit{Next}_{#1}}
\newcommand{\Z}[1]{\mathit{Z}_{#1}}
\newcommand{\Zi}[2]{\Z{#1}[#2]}
\DeclareMathOperator*{\argmin}{argmin}
\DeclareMathOperator*{\argmax}{argmax}
\newcommand{\cpp}{C\nolinebreak\hspace{-.05em}\raisebox{.4ex}{\tiny\bf +}\nolinebreak\hspace{-.10em}\raisebox{.4ex}{\tiny\bf +}}
\newcommand{\PZS}{P{\v{Z}}{\v{S}}}
\begin{document}

\title{Longest Common Subsequence in \\ at Least $k$ Length Order-Isomorphic Substrings\thanks{
	The final publication is available at Springer via \texttt{http://dx.doi.org/10.1007/978-3-319-51963-0\_28}.
	However, it contains some crucial typos (see Appendix).
}}

\author[1]{Yohei Ueki}
\author[1]{Diptarama}
\author[1]{Masatoshi Kurihara}
\author[2]{Yoshiaki Matsuoka}
\author[1]{Kazuyuki~Narisawa}
\author[1]{Ryo Yoshinaka}
\author[2]{Hideo Bannai}
\author[2]{Shunsuke Inenaga}
\author[1]{Ayumi~Shinohara}

\affil[1]{Graduate School of Information Sciences, Tohoku University, Sendai, Japan \newline
	\texttt{\{yohei\_ueki,diptarama,masatoshi\_kurihara\}@shino.ecei.tohoku.ac.jp \newline
		\{narisawa,ry,ayumi\}@ecei.tohoku.ac.jp}}
\affil[2]{Department of Informatics, Kyushu University, Fukuoka, Japan \newline
	\texttt{\{yoshiaki.matsuoka, bannai, inenaga\}@inf.kyushu-u.ac.jp}}

\pagestyle{plain} 

\date{}

\maketitle

\begin{abstract}
We consider the longest common subsequence (LCS) problem with the restriction that 
the common subsequence is required to consist of at least $k$ length substrings.
First, we show an $O(mn)$ time algorithm for the problem
which gives a better worst-case running time than existing algorithms,
where $m$ and $n$ are lengths of the input strings.
Furthermore, we mainly consider the LCS in at least $k$ length order-isomorphic substrings problem.
We show that the problem can also be solved in $O(mn)$ worst-case time
by an easy-to-implement algorithm.

\end{abstract}

\section{Introduction}
The \emph{longest common subsequence~(LCS)} problem is fundamental and well studied in computer science.
The most common application of the LCS problem is measuring similarity between strings,
which can be used in many applications such as the \texttt{diff} tool, the time series data analysis~\cite{ref:journal/WASJ/Khan13}, and in bioinformatics.

One of the major disadvantages of LCS as a measure of similarity is that
LCS cannot consider consecutively matching characters effectively.
For example, for strings $X = \mathtt{ATGG}, Y = \mathtt{ATCGGC}$ and $Z = \mathtt{ACCCTCCCGCCCG}$,
$\mathtt{ATGG}$ is the LCS of $X$ and $Y$, which is also the LCS of $X$ and $Z$.
Benson \emph{et al.}~\cite{ref:conf/sisap/Benson13} introduced the \emph{longest common subsequence in $k$ length substrings~(LCS$_k$)} problem, where the subsequence needs to be a concatenation of $k$ length substrings of given strings.
For example, for strings $X = \mathtt{ATCTATAT}$ and $Y = \mathtt{TAATATCC}$, $\mathtt{TAAT}$ is an LCS$_2$
since $X[4:5] = Y[1:2] = \mathtt{TA}$ and $X[7:8] = Y[5:7] = \mathtt{AT}$, and no longer one exists.
They showed a quadratic time algorithm for it, and
Deorowicz and Grabowski~\cite{ref:journals/ipl/Deorowicz14} proposed several algorithms,
such as a quadratic worst-case time algorithm for unbounded $k$ and a fast algorithm on average.

Paveti{\'c} \emph{et al.}~\cite{ref:preprint/Pavetic14} considered 
the \emph{longest common subsequence in at least $k$ length substrings~(\problemabbr)} problem,
where the subsequence needs to be a concatenation of \emph{at least} $k$ length substrings of given strings.
They argued that {\problemabbr} would be more appropriate than LCS$_k$ as a similarity measure of strings.
For strings $X = \mathtt{ATTCGTATCG}$, $Y = \mathtt{ATTGCTATGC}$, and $Z = \mathtt{AATCCCTCAA}$, 
$\mathit{LCS}_2(X, Y) = \mathit{LCS}_2(X, Z) = 4$, where $\mathit{LCS}_2(A, B)$ denotes the length of an LCS$_2$ between $A$ and $B$.
However, it seems that $X$ and $Y$ are more similar than $X$ and $Z$.
Instead, if we consider {\problemabbrk{2}}, we have 
$\mathit{LCS}_{2^+}(X, Y) = 6 > 4 = \mathit{LCS}_{2^+}(X, Z)$,
that better fits our intuition.
The notion of {\problemabbr} is applied to bioinformatics~\cite{ref:journal/natcommun/Sovic16}.

Paveti{\'c} \emph{et al.} showed that {\problemabbr} can be computed in $O(m + n + r \log r + r \log n)$ time,
where $m, n$ are lengths of the input strings and $r$ is the total number of matching $k$ length substring pairs between the input strings.
Their algorithm is fast on average, but in the worst case, the running time is $O(mn \log (mn))$.
Independently, Benson~\emph{et al.}~\cite{ref:conf/sisap/Benson13} proposed an $O(kmn)$ worst-case time algorithm
for the {\problemabbr} problem.

In this paper, we first propose an algorithm to compute {\problemabbr} in $O(mn)$ worst-case time by a simple dynamic programming. 
Secondly, we introduce the \emph{\opproblem~(\opproblemabbr)} problem.
Order-isomorphism is a notion of equality of two numerical strings,
intensively studied in the \emph{order-preserving matching} problem\footnote{
	Since the problem is motivated by the order-preserving matching problem,
	we abbreviate it to the {\opproblemabbr} problem.
}~\cite{ref:journal/TCS/Kim14,ref:journals/ipl/Kubica13}.
{\opproblemabbr} is a natural definition of similarity between numerical strings,
and can be used in time series data analysis.
The {\opproblemabbr} problem cannot be solved as simply as the {\problemabbr} problem
due to the properties of the order-isomorphism.
However, we will show that the {\opproblemabbr} problem can also be solved in $O(mn)$ worst-case time by an easy-to-implement algorithm, which is one of the main contributions of this paper.
Finally, we report experimental results.

\section{Preliminaries}
We assume that all strings are over an \emph{alphabet} $\Sigma$.
The length of a string $X = (X[1], X[2], \cdots, X[n]) $ is denoted by $\len{X} = n$.
A $substring$ of $X$ beginning at $i$ and ending at $j$ is denoted by $\substr{X}{i}{j} = (X[i], X[i+1], \cdots, X[j-1], X[j])$.
We denote $\sublen{X}{i}{l} = \substr{X}{i}{i+l-1}$ and $\sublastlen{X}{j}{l} = \substr{X}{j-l+1}{j}$. 
Thus $\sublen{X}{i}{l} = \sublastlen{X}{i+l-1}{l}$.
We write $\prefix{X}{i}$ and $\suffix{X}{j}$ to denote the \emph{prefix} $\substr{X}{1}{i}$ and the \emph{suffix} $\substr{X}{j}{n}$ of $X$, respectively.
Note that $\prefix{X}{0}$ is the empty string.
The reverse of a string $X$ is denoted by $\reverse{X}$, and
the operator $\cdot$ denotes the concatenation.
We simply denote a string $X = (X[1], X[2], \cdots, X[n])$ as $X=X[1]X[2]\cdots X[n]$ when clear from the context.

We formally define the {\problemabbr} problem as follows.
\begin{definition}[{\problemabbr} problem~\cite{ref:conf/sisap/Benson13,ref:preprint/Pavetic14}\footnote{
		The formal definition given by Paveti{\'c} \emph{et al.}~\cite{ref:preprint/Pavetic14} contains a minor error,
		i.e., they do not require that each chunk is identical, while Benson~\emph{et al.}~\cite{ref:conf/sisap/Benson13} and we do (confirmed by F. Paveti{\'c}, personal communication, October 2016).
	}] \label{def:LCSkStandard}
	Given two strings $X$ and $Y$ of length $m$ and $n$, respectively, and an integer $k \ge 1$,
	we say that $Z$
	is a \emph{common subsequence in at least $k$ length substrings} of $X$ and $Y$, if there exist $i_1, \cdots, i_t$ and $j_1, \cdots, j_t$ such that
	$\sublen{X}{i_s}{l_s} = \sublen{Y}{j_s}{l_s} = \sublen{Z}{p_s}{l_s}$ and
	$l_s \ge k$ for $1 \le s \le t$, 
	and $i_{s} + l_{s} \le i_{s+1}$, $j_{s} + l_{s} \le j_{s+1}$
	and $p_{s+1} = p_s + l_s$ for $1 \le s < t$, $p_1 = 1$ and $|Z| = p_t + l_t - 1$.
	The \emph{longest common subsequence in at least $k$ length substrings~(\problemabbr)} problem asks for the length of an {\problemabbr} of $X$ and $Y$. 
\end{definition}
Remark that the {\problemabbrk{1}} problem is equivalent to the standard LCS problem.
Without loss of generality, we assume that $n \ge m$ through the paper.

\begin{example}
	For strings $X = \mathtt{acdbacbc}$ and $Y = \mathtt{aacdabca}$,
	$Z = \mathtt{acdbc}$ is the {\problemabbrk{2}} of $X$ and $Y$, since $\sublen{X}{1}{3} = \sublen{Y}{2}{3} = \mathtt{acd} = \sublen{Z}{1}{3}$
	and $\sublen{X}{7}{2} = \sublen{Y}{6}{2} = \mathtt{bc} = \sublen{Z}{4}{2}$.
	Note that the standard LCS of $X$ and $Y$ is $\mathtt{acdabc}$.
\end{example}

The main topic of this paper is to give an efficient algorithm for computing the longest common subsequence \emph{under order-isomorphism}, defined below.

\begin{definition}[Order-isomorphism~\cite{ref:journal/TCS/Kim14,ref:journals/ipl/Kubica13}]
	Two strings $S$ and $T$ of the same length $l$ over an ordered alphabet are \emph{order-isomorphic} if
	$S[i] \leq S[j] \ \Longleftrightarrow  \ T[i] \leq T[j]$
	for any $1 \leq i,j \leq l$.
	We write $S \orderisomorphic T$ if $S$ is order-isomorphic to $T$, and
	$S \not\orderisomorphic T$ otherwise.
\end{definition}

\begin{example}
	For strings $S = (32, 40, 4, 16, 27)$, $T = (28, 32, 12, 20, 25)$ and $U = (33, 51, 10,$ $22, 42)$,
	we have $S \orderisomorphic T$, $S \not\orderisomorphic U$, and $T \not\orderisomorphic U$.
\end{example}

\begin{definition}[{\opproblemabbr} problem] \label{def:opLCSproblem}
	The \emph{\opproblemabbr} problem is 
	defined as the problem
	obtained from Definition~\ref{def:LCSkStandard} by replacing 
	the matching relation $\sublen{X}{i_s}{l_s} = \sublen{Y}{j_s}{l_s} = \sublen{Z}{p_s}{l_s}$ with order-isomorphism
	$\sublen{X}{i_s}{l_s} \orderisomorphic \sublen{Y}{j_s}{l_s} \orderisomorphic \sublen{Z}{p_s}{l_s}$.
\end{definition}

\begin{example}\label{example:op-LCSk+}
	For strings $X = (14, 84, 82, 31, 74, 68, 87, 11, 20, 32)$ and $Y = (21, 64,$ $2, 83, 73, 51, 5, 29, 7, 71)$,
	$Z = (1, 3, 2, 31, 74, 68, 87)$ is an {\opproblemabbrk{3}} of $X$ and $Y$ since
	$\sublen{X}{1}{3} \orderisomorphic \sublen{Y}{3}{3} \orderisomorphic \sublen{Z}{1}{3} $ 
	and $\sublen{X}{4}{4} \orderisomorphic \sublen{Y}{7}{4} \orderisomorphic \sublen{Z}{4}{4}$.
\end{example}

The {\opproblemabbr} problem does not require that
$( \sublen{X}{i_1}{l_1} \cdot \sublen{X}{i_2}{l_2} \cdot \; \cdots \; \cdot \sublen{X}{i_t}{l_t}) \approx 
(\sublen{Y}{j_1}{l_1} \cdot \sublen{Y}{j_2}{l_2} \cdot \; \cdots \; \cdot \sublen{Y}{j_t}{l_t})$.
Therefore, the {\opproblemabbrk{1}} problem makes no sense.
Note that the {\opproblemabbr} problem with this restriction
is \textbf{NP}-hard already for $k=1$~\cite{ref:conf/cpm/Bouvel07}.
\section{The {\problemabbr} Problem}\label{sec:standard}
In this section, we show that the {\problemabbr} problem
can be solved in $O(mn)$ time by dynamic programming.
We define $\match{i}{j}{l} = 1$ if $\sublastlen{X}{i}{l} = \sublastlen{Y}{j}{l}$, and $0$ otherwise.
Let $\cop{i}{j}$ be the length of an {\problemabbr} of $\prefix{X}{i}$ and $\prefix{Y}{j}$,
and $\A = \left\{\cop{i - l}{j - l} + l \cdot \match{i}{j}{l} : k \le l \le \min\{i, j\}  \right\}$.
Our algorithm is based on the following lemma.

\begin{lemma}[\cite{ref:conf/sisap/Benson13}]
	\label{lemma:optimal-substructure}
	For any $k \leq i \leq m$ and $k \leq j \leq n$, 
	\begin{align} \label{eq:C}
		\cop{i}{j}=
		\max\left(\{\cop{i}{j-1}, \cop{i-1}{j}\} \cup \A \right),		
	\end{align}
	and $\cop{i}{j} = 0$ otherwise.
\end{lemma}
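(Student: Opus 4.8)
The plan is to establish the two inequalities between $\cop{i}{j}$ and the right-hand side of~\eqref{eq:C} separately, after first disposing of the boundary regime. If $i<k$ or $j<k$, then no block of length $l_s\ge k$ fits as a substring of $\prefix{X}{i}$ or $\prefix{Y}{j}$, so the only common subsequence in at least $k$ length substrings of these prefixes is the empty one and $\cop{i}{j}=0$; in particular $\cop{i}{j}\ge 0$ for all $i,j$. From now on assume $k\le i\le m$ and $k\le j\le n$, which also guarantees $\A\ne\emptyset$. I will repeatedly use the obvious monotonicity: a common subsequence in at least $k$ length substrings of $\prefix{X}{i'}$ and $\prefix{Y}{j'}$ with $i'\le i$, $j'\le j$ is also a common subsequence in at least $k$ length substrings of $\prefix{X}{i}$ and $\prefix{Y}{j}$, hence $\cop{i'}{j'}\le\cop{i}{j}$.

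For $\cop{i}{j}\ge\max(\cdot)$ I would show every element of $\{\cop{i}{j-1},\cop{i-1}{j}\}\cup\A$ is at most $\cop{i}{j}$. The first two follow from monotonicity. For an element $\cop{i-l}{j-l}+l\cdot\match{i}{j}{l}$ of $\A$ with $k\le l\le\min\{i,j\}$: if $\match{i}{j}{l}=0$ it equals $\cop{i-l}{j-l}\le\cop{i}{j}$; if $\match{i}{j}{l}=1$, then taking an optimal solution $Z'$ for $\prefix{X}{i-l}$ and $\prefix{Y}{j-l}$ and setting $Z=Z'\cdot\sublastlen{X}{i}{l}$ gives a valid common subsequence in at least $k$ length substrings of $\prefix{X}{i}$ and $\prefix{Y}{j}$, since the last block of $Z'$ in $X$ (resp. $Y$) ends at most at position $i-l$ (resp. $j-l$), leaving room for the new block of length $l\ge k$ occupying $X[i-l+1:i]=Y[j-l+1:j]$; its length is $\cop{i-l}{j-l}+l$.

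For the reverse inequality, fix an optimal solution $Z$ for $\prefix{X}{i}$ and $\prefix{Y}{j}$, witnessed by $i_1,\dots,i_t$, $j_1,\dots,j_t$ and lengths $l_1,\dots,l_t$, and split on the last block. If $t=0$ then $\len{Z}=0$. If $i_t+l_t-1<i$, then $Z$ is already a solution for $\prefix{X}{i-1}$ and $\prefix{Y}{j}$, so $\len{Z}\le\cop{i-1}{j}$; symmetrically, $j_t+l_t-1<j$ gives $\len{Z}\le\cop{i}{j-1}$. Otherwise, since the last block lies inside the prefixes, $i_t+l_t-1=i$ and $j_t+l_t-1=j$. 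Put $l=l_t$; then $k\le l$, $l\le i_t+l_t-1=i$, and $l\le j$, so $k\le l\le\min\{i,j\}$, and $\sublastlen{X}{i}{l}=\sublen{X}{i_t}{l_t}=\sublen{Y}{j_t}{l_t}=\sublastlen{Y}{j}{l}$ forces $\match{i}{j}{l}=1$. The first $t-1$ blocks of $Z$ form a common subsequence in at least $k$ length substrings of $\prefix{X}{i_t-1}$ and $\prefix{Y}{j_t-1}$, because the constraints $i_s+l_s\le i_{s+1}$ (and the analogue for $Y$) make the $(t-1)$-st block end at most at $i_t-1$ (resp. $j_t-1$); as $i_t-1=i-l$ and $j_t-1=j-l$, its length $p_t-1$ satisfies $p_t-1\le\cop{i-l}{j-l}$, so $\len{Z}=p_t+l_t-1\le\cop{i-l}{j-l}+l$, which is a term of $\A$. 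Combining the cases yields $\cop{i}{j}\le\max(\cdot)$.

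I expect the only delicate bookkeeping to sit in this last case: deducing from the non-overlap constraints that the initial $t-1$ blocks fit within $\prefix{X}{i-l}$ and $\prefix{Y}{j-l}$, and checking that $l=l_t$ genuinely lies in the admissible range $[k,\min\{i,j\}]$ so that the corresponding term actually appears in $\A$. Everything else reduces to the monotonicity observation and the direct append/truncate constructions.
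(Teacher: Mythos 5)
Your proof is correct: the boundary case, the monotonicity observation, the append construction for the $\ge$ direction, and the case split on where the last block of an optimal solution ends (including the verification that $l=l_t$ lands in $[k,\min\{i,j\}]$ and that the first $t-1$ blocks fit inside $\prefix{X}{i-l}$ and $\prefix{Y}{j-l}$) together give a complete argument. The paper itself states this lemma only with a citation to Benson \emph{et al.} and supplies no proof, so there is nothing to compare against; your write-up is the standard optimal-substructure argument that the citation implicitly relies on.
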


The naive dynamic programming algorithm based on Equation~(\ref{eq:C}) takes $O(m^2n)$ time,
because  for each $i$ and $j$,
the naive algorithm for computing $\max\A$ takes $O(m)$ time assuming $n \ge m$.
Therefore, we focus on how to compute
$\max \A$ in constant time
for each $i$ and $j$
in order to solve the problem in $O(mn)$ time.
It is clear that if $\match{i}{j}{l_1} = 0$ then $\match{i}{j}{l_2} = 0$ for all valid $l_2 \ge l_1$,
and $\cop{i'}{j'} \ge \cop{i' - l'}{j'-l'}$ for all valid $i', j'$ and $l' > 0$.
Therefore, in order to compute 
$\max \A$,
it suffices to compute $\max_{k \le l \le L[i, j]}\{\cop{i - l}{j - l} + l \}$,
where  $L[i, j] = \max\{l: \sublastlen{X}{i}{l} = \sublastlen{Y}{j}{l}\}$.

We can compute $L[i, j]$ for all $0 \le i \le m$ and $0 \le j \le n$ in $O(mn)$ time by dynamic programming
because the following equation clearly holds:
\begin{equation} \label{eq:DP-LCE}
	L[i, j] = 
	\begin{cases}
		L[i-1, j-1] + 1 & \text{(if $i, j > 0$ and $X[i]=Y[j]$)} \\
		0 & \text{(otherwise)}.
	\end{cases}
\end{equation}

Next, we show how to compute $\max_{k \le l \le L[i, j]}\{\cop{i - l}{j - l} + l \}$ in constant time for each $i$ and $j$.
Assume that the table $L$ has already been computed.
Let $\cmaxijk{i}{j}{k} = \max_{ k \le l \le L[i, j]}\{\cop{i-l}{j-l} + l \}$ if $L[i, j] \ge k$, and $-1$ otherwise.

\begin{lemma} \label{lemma:cmax2} 
	For any $0 \leq i \leq m$ and $0 \leq j \leq n$,  
	if $L[i, j] > k$ then $\cmaxijk{i}{j}{k} = \max\{\cmaxijk{i-1}{j-1}{} + 1, \cop{i-k}{j-k} + k\}$.
\end{lemma}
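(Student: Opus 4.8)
The plan is to unfold the definition $\cmaxijk{i}{j}{k}=\max_{k\le l\le L[i,j]}\{\cop{i-l}{j-l}+l\}$, split the index range $\{l:k\le l\le L[i,j]\}$ into the single value $l=k$ and the block $\{l:k+1\le l\le L[i,j]\}$, and then re-index the block by $l\mapsto l-1$ so that it becomes (a unit shift of) the set defining $\cmaxijk{i-1}{j-1}{k}$. Throughout I use that the macro $\cmaxijk{i-1}{j-1}{}$ denotes $\max_{k\le l\le L[i-1,j-1]}\{\cop{(i-1)-l}{(j-1)-l}+l\}$ when $L[i-1,j-1]\ge k$.

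First I would record the structural consequences of the hypothesis $L[i,j]>k$. Since $k\ge 1$, this forces $L[i,j]\ge 2$, in particular $L[i,j]\ne 0$, so by Equation~(\ref{eq:DP-LCE}) we must be in its first case: $i,j>0$, $X[i]=Y[j]$, and $L[i-1,j-1]=L[i,j]-1\ge k$. Hence $\cmaxijk{i-1}{j-1}{k}$ really is the maximum $\max_{k\le l\le L[i-1,j-1]}\{\cop{(i-1)-l}{(j-1)-l}+l\}$ and not the sentinel $-1$, so the right-hand side of the claimed identity is well defined. I would also note that every index $l$ appearing below satisfies $l\le L[i,j]\le\min\{i,j\}$, so all entries $\cop{i-l}{j-l}$ are legitimate (nonnegative-index) table entries.

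Then comes the computation. Because $L[i,j]\ge k+1$, the range $\{l:k\le l\le L[i,j]\}$ is the disjoint union of $\{k\}$ and the nonempty block $\{l:k+1\le l\le L[i,j]\}$, so
\[
\cmaxijk{i}{j}{k}
 = \max\Bigl(\{\cop{i-k}{j-k}+k\}\ \cup\ \bigl\{\cop{i-l}{j-l}+l : k+1 \le l \le L[i,j]\bigr\}\Bigr).
\]
Substituting $l=l'+1$, as $l$ ranges over $k+1,\dots,L[i,j]$ the index $l'$ ranges over $k,\dots,L[i,j]-1=L[i-1,j-1]$, and $\cop{i-l}{j-l}+l=\cop{(i-1)-l'}{(j-1)-l'}+l'+1$. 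Thus the block equals $\bigl\{\cop{(i-1)-l'}{(j-1)-l'}+l'+1 : k\le l'\le L[i-1,j-1]\bigr\}$, which is exactly the set defining $\cmaxijk{i-1}{j-1}{k}$ with each element increased by $1$; its maximum is therefore $\cmaxijk{i-1}{j-1}{k}+1$. Taking the overall maximum gives $\cmaxijk{i}{j}{k}=\max\{\cmaxijk{i-1}{j-1}{k}+1,\ \cop{i-k}{j-k}+k\}$, as claimed.

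There is no genuine obstacle here; the argument is a routine change of index. The only points needing care are (i) deducing from Equation~(\ref{eq:DP-LCE}) that $L[i,j]>k$ forces $L[i-1,j-1]=L[i,j]-1$, so that the re-indexed range is precisely the one defining $\cmaxijk{i-1}{j-1}{k}$ with no off-by-one slack, and (ii) checking that under the hypothesis $\cmaxijk{i-1}{j-1}{k}\ne -1$, so that the asserted equality is an equality of numbers and not a comparison against the default value; both are immediate from $k\ge 1$ and the definitions.
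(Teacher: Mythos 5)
Your proof is correct and follows essentially the same route as the paper's: both deduce $L[i-1,j-1]=L[i,j]-1\ge k$ from the hypothesis, observe that $\cmaxijk{i-1}{j-1}{k}\neq -1$, and then split off the term $l=k$ and re-index the remaining block $k+1\le l\le L[i,j]$ by a unit shift to identify it with $\cmaxijk{i-1}{j-1}{k}+1$. Your version is just slightly more explicit about invoking Equation~(\ref{eq:DP-LCE}) to justify $L[i-1,j-1]=L[i,j]-1$.
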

\begin{proof}
	Let $l = L[i, j]$.
	Since $L[i, j] > k$, we have $L[i-1, j-1] = l - 1 \ge k$, and $\cmaxijk{i-1}{j-1}{k} \neq -1$.
	Therefore, 
	$	\cmaxijk{i-1}{j-1}{k} = \max_{k \le l' \le l-1}\{\cop{i- 1 - l'}{j- 1 - l'} + l'\} = \max_{k+1 \le l' \le l}\{\cop{i - l'}{j - l'} + l'\} - 1.$
	Hence, $\cmaxijk{i}{j}{k} = \max_{ k \le l' \le l}\{\cop{i-l'}{i-l'} + l'\} = \max\{\cmaxijk{i-1}{j-1}{}+ 1, \cop{i-k}{j-k} + k\}$.
	\qed
\end{proof}
By Lemma \ref{lemma:cmax2} and the definition of $\cmaxijk{i}{j}{}$, we have
\begin{equation} \label{eq:cmax}
	\cmaxijk{i}{j}{k} =
	\begin{cases}
		\max\{\cmaxijk{i-1}{j-1}{}+1, \cop{i-k}{j-k}+k \} & \text{(if $L[i, j]>k$)} \\
		\cop{i-k}{j-k}+k & \text{(if $L[i, j]=k$)} \\
		-1 & \text{(otherwise).}
	\end{cases}
\end{equation}
Equation~(\ref{eq:cmax}) shows that each $\cmaxijk{i}{j}{}$ can be computed in constant time if $L[i,j]$, $M[i-1,j-1]$, and $C[i-k,j-k]$ have already been computed.

\begin{figure}[t]
	\begin{minipage}[t]{0.48\linewidth}
		\includegraphics[width=5.5cm]{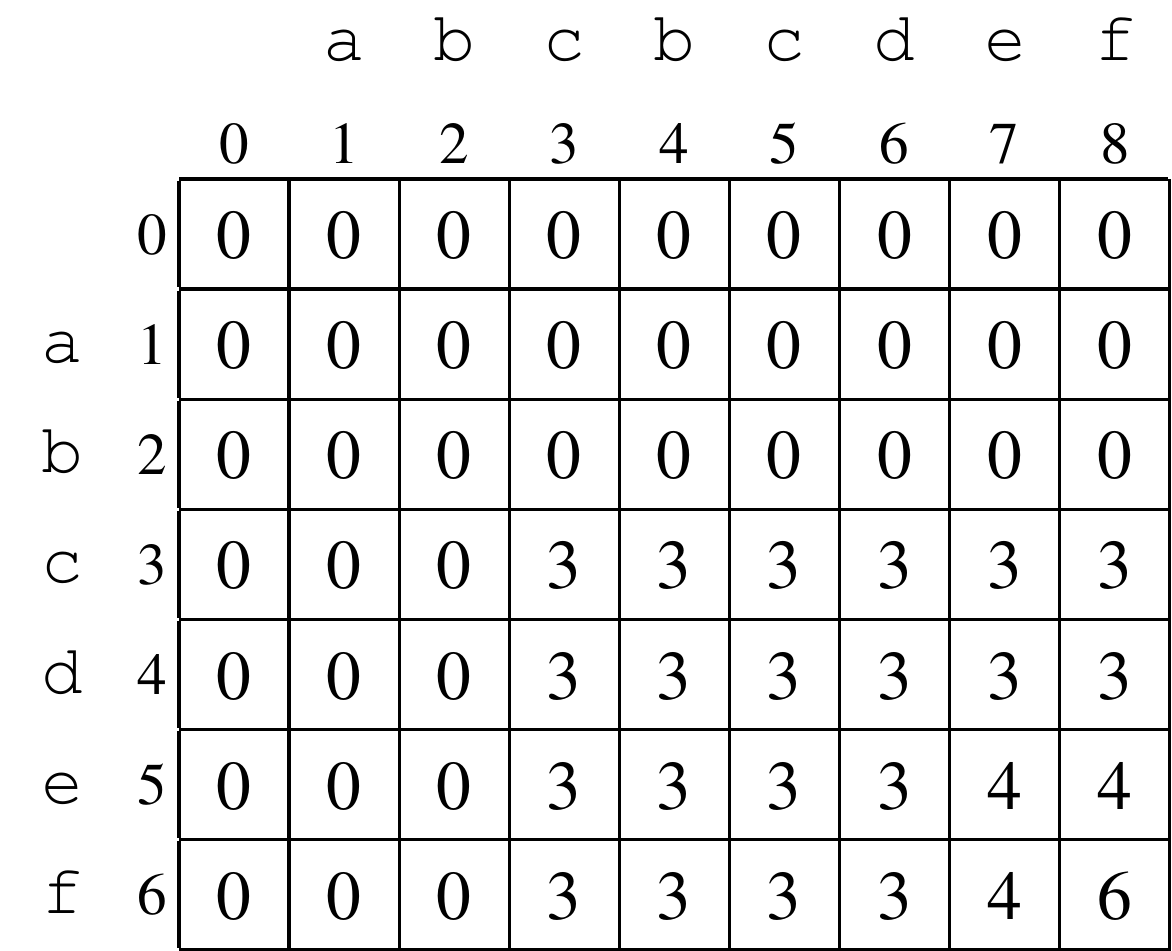}
		\subcaption{Table $C$ for the {\problemabbrk{3}} problem}
		\label{fig:standard/example}
	\end{minipage}
	\begin{minipage}[t]{0.48\linewidth}
		\includegraphics[width=5.5cm]{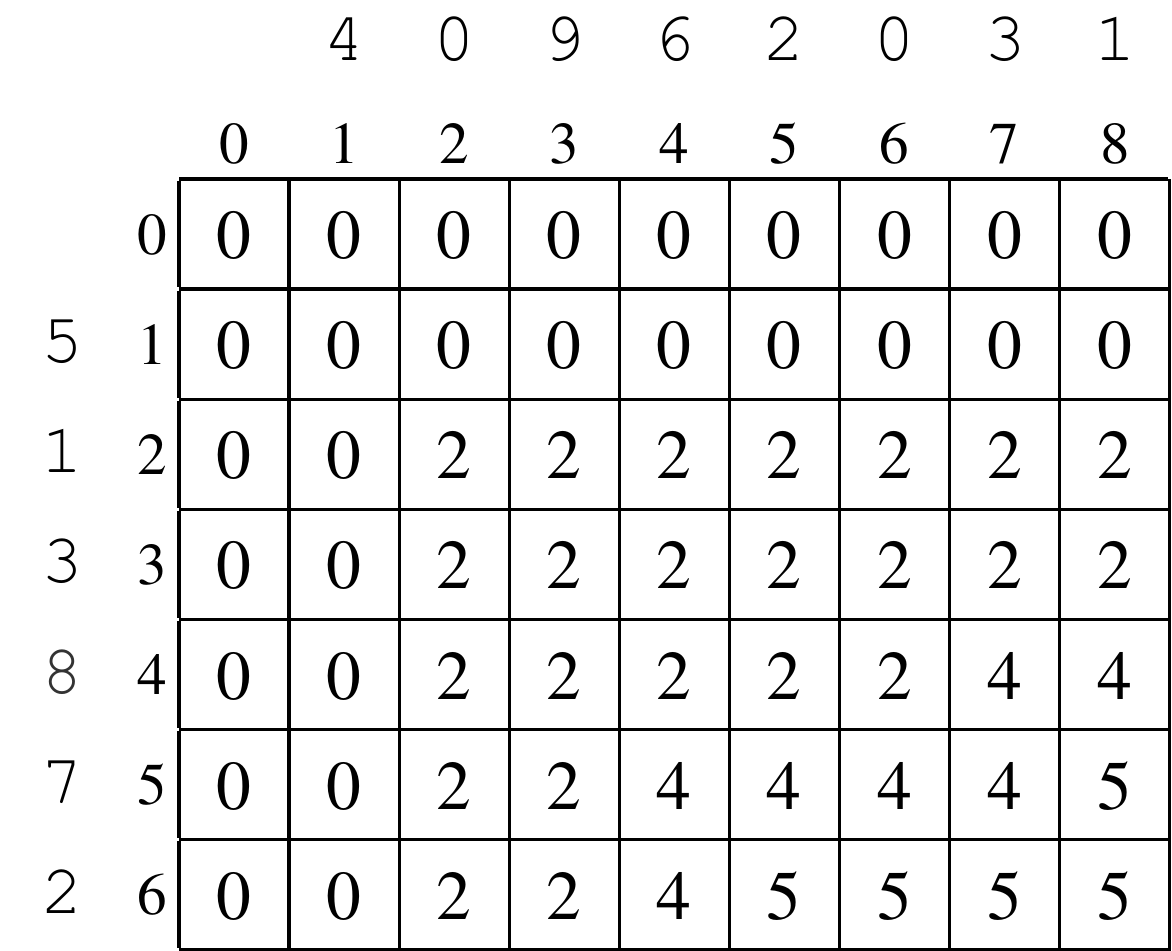}
		\subcaption{Table $C$ for the {\opproblemabbrk{2}} problem}
		\label{fig:OPLCS/example}
	\end{minipage}
	\caption{Examples of computing {\problemabbrk{3}} and {\opproblemabbrk{2}}}\label{fig:examples}
\end{figure}

We can fill in tables $\co$, $L$ and $M$ of size $(m+1) \times (n+1)$ based on Equations~(\ref{eq:C}), (\ref{eq:DP-LCE}) and (\ref{eq:cmax}) in $O(mn)$ time
by dynamic programming.
An example of computing {\problemabbrk{3}}
is shown in Fig.~\ref{fig:examples}\subref{fig:standard/example}.
We note that {\problemabbr} itself (not only its length) can be extracted from the table $\co$ in $O(m+n)$ time,
by tracing back in the same way as the standard dynamic programming algorithm for the standard LCS problem.
Our algorithm requires $O(mn)$ space since we use three tables of size $(m+1) \times (n+1)$.
Note that if we want to compute only the length of an {\problemabbr}, 
the space complexity can be easily reduced to $O(km)$.
Hence, we get the following theorem.

\begin{theorem}
	The {\problemabbr} problem can be solved in $O(mn)$ time and $O(km)$ space.
\end{theorem}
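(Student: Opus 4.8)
The plan is to bolt together the three recurrences already derived in this section into one dynamic program. By Lemma~\ref{lemma:optimal-substructure}, the table $\co$ defined by Equation~(\ref{eq:C}) has the property that $\cop{m}{n}$ is the length of an {\problemabbr} of $X$ and $Y$, so it suffices to compute $\cop{m}{n}$. The reduction preceding Lemma~\ref{lemma:cmax2} lets us replace the set maximum $\max\A$ in Equation~(\ref{eq:C}) by the single value $M[i,j]$; Equation~(\ref{eq:DP-LCE}) fills the auxiliary table $L$; and Equation~(\ref{eq:cmax}) fills $M$. Correctness is then a matter of bookkeeping: I would verify that every entry appearing on a right-hand side — namely $\cop{i}{j-1}$, $\cop{i-1}{j}$, $\cop{i-k}{j-k}$, $L[i-1,j-1]$ and $M[i-1,j-1]$ — has already been assigned when cell $(i,j)$ is reached, and that the boundary conventions ($\cop{i}{j}=0$ when $i<k$ or $j<k$, and $M[i,j]=-1$ when $L[i,j]<k$) mesh with the recurrences.

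For the running time, note that each of the listed dependencies points from $(i,j)$ to some $(i',j')$ with $i'\le i$, $j'\le j$ and $(i',j')\ne(i,j)$; hence filling $\co$, $L$ and $M$ in row-major order (say) is a legal evaluation order, and since there are $(m+1)(n+1)$ cells and each costs $O(1)$ by Equations~(\ref{eq:C}), (\ref{eq:DP-LCE}) and (\ref{eq:cmax}), the total time is $O(mn)$. The length of an {\problemabbr} is read off as $\cop{m}{n}$.

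For the $O(km)$ space bound I would instead scan the tables column by column, $j=0,1,\dots,n$, using the standing hypothesis $n\ge m$. Inspecting the recurrences, every entry needed to fill column $j$ lies in one of the columns $j,j-1,\dots,j-k$: the terms $\cop{i}{j-1}$ and $\cop{i-1}{j}$ stay within the last two columns, $L[i-1,j-1]$ and $M[i-1,j-1]$ likewise, and the only deeper look-back is the diagonal term $\cop{i-k}{j-k}$, which reaches back exactly $k$ columns. So it is enough to retain the most recent $k+1$ columns of $\co$ and the most recent two columns of $L$ and $M$, discarding older ones; each column has $m+1$ entries, giving $O(km)$ space.

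The one place that needs care is this space argument: I must check that the retained window of columns is wide enough for the diagonal look-back $\cop{i-k}{j-k}$ (which occurs both in Equation~(\ref{eq:C}) and in Equation~(\ref{eq:cmax})), and that the boundary columns $j<k$, on which $\co$ vanishes, as well as the boundary rows, require no extra storage. Everything else is a direct assembly of Lemmas~\ref{lemma:optimal-substructure} and~\ref{lemma:cmax2} together with Equations~(\ref{eq:DP-LCE}) and~(\ref{eq:cmax}). I would close by remarking that recovering an actual {\problemabbr} by traceback, as noted after Equation~(\ref{eq:cmax}), requires the full table $\co$ and hence $O(mn)$ space; the $O(km)$ bound applies when only the length is wanted.
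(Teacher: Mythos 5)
Your proposal is correct and follows essentially the same route as the paper: assemble Equations~(\ref{eq:C}), (\ref{eq:DP-LCE}) and (\ref{eq:cmax}) into a single dynamic program with $O(1)$ work per cell, and observe that only a bounded window of recent columns (of width $k+1$ for $\co$, dictated by the look-back $\cop{i-k}{j-k}$) is needed when just the length is required, giving $O(km)$ space. The paper merely asserts the space reduction, whereas you spell out the column-window argument; the content is the same.
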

\section{The {\opproblemabbr} Problem}
In this section, we show that the {\opproblemabbr} problem can be solved in $O(mn)$ time
as well as the {\problemabbr} problem.
We redefine $\cop{i}{j}$ to be the length of an {\opproblemabbr} of $\prefix{X}{i}$ and $\prefix{Y}{j}$,
and $\match{i}{j}{l} = 1$ if $\sublastlen{X}{i}{l} \approx \sublastlen{Y}{j}{l}$, and $0$ otherwise.
It is easy to prove that Equation~(\ref{eq:C}) also holds with respect to the order-isomorphism.
However, the {\opproblemabbr} problem cannot be solved as simply as
the {\problemabbr} problem
because Equations~(\ref{eq:DP-LCE}) and (\ref{eq:cmax}) do not hold with respect to the order-isomorphism, as follows.
For two strings $A, B$ of length $l$ such that $A \orderisomorphic B$,
and two characters $a, b$ such that $A \cdot a \not\orderisomorphic B \cdot b$,
the statement ``$\suffix{(A \cdot a)}{i} \not\orderisomorphic \suffix{(B \cdot b)}{i}$ for all $1 \le i \le l+1$''
is not always true.
For example, for strings $A = (32, 40, 4, 16, 27)$, $B = (28, 32, 12, 20, 25)$, $A' = A \cdot (41)$ and $B' = B \cdot (26)$,
we have $A \orderisomorphic B$, $A' \not\orderisomorphic B'$, and $\suffix{A'}{3} \orderisomorphic \suffix{B'}{3}$.
Moreover, for $A'' = A \cdot (15)$ and $B'' = B \cdot (22)$, we have $\suffix{A''}{5} \orderisomorphic \suffix{B''}{5}$.
These examples show that Equations~(\ref{eq:DP-LCE}) and (\ref{eq:cmax}) do not hold with respect to the order-isomorphism.
Therefore, we must find another way to compute
$\max_{k \le l' \le l}\{\cop{i - l'}{j - l'} + l' \}$,
where
 $l = \max\{l': \sublastlen{X}{i}{l'} \approx \sublastlen{Y}{j}{l'}\}$
in constant time.

First, we consider how to find  $\max\{l: \sublastlen{X}{i}{l} \orderisomorphic \sublastlen{Y}{j}{l} \} $
in constant time.
We define the \emph{order-preserving longest common extension~(op-LCE)} query on strings $S_1$ and $S_2$ as follows.
\begin{definition}[op-LCE query]
	Given a pair $(S_1,S_2)$ of strings,
	an \emph{op-LCE query} is a pair of indices $i_1$ and $i_2$ of $S_1$ and $S_2$, respectively,
	which asks $\opLCEps{i_1}{i_2}{S_1}{S_2}=\max\{l: \sublen{S_1}{i_1}{l} \orderisomorphic \sublen{S_2}{i_2}{l}\}$.
\end{definition}

Since $\max\{l: \sublastlen{X}{i}{l} \orderisomorphic \sublastlen{Y}{j}{l} \} = 
\opLCEps{m-i+1}{n-j+1}{\reverse{X}}{\reverse{Y}} $,
we can find $\max\{l: \sublastlen{X}{i}{l} \orderisomorphic \sublastlen{Y}{j}{l} \}$
by using op-LCE queries on $\reverse{X}$ and $\reverse{Y}$.
Therefore, we focus on how to answer op-LCE queries on $S_1$ and $S_2$ in constant time with at most $O(|S_1||S_2|)$ time preprocessing.
Hereafter we write $\opLCE{i_1}{i_2}$ for $\opLCEps{i_1}{i_2}{S_1}{S_2}$ fixing two strings $S_1$ and $S_2$. 

If $S_1$ and $S_2$ are strings over a polynomially-bounded integer alphabet $\{1, \cdots, (|S_1| + |S_2|)^c \}$ for an integer constant $c$,
op-LCE queries can be answered in $O(1)$ time and $O(|S_1| + |S_2|)$ space
with $O((|S_1| + |S_2|) \log^2\log(|S_1| + |S_2|)/\log\log\log(|S_1|+|S_2|))$ time preprocessing,
by using
 the \emph{incomplete generalized op-suffix-tree}~\cite{ref:journal/TCS/Crochemore15} of $S_1$ and $S_2$
and finding the \emph{lowest common ancestor~(LCA)}~\cite{Bender2000} in the op-suffix-tree.
The proof is similar to that for LCE queries in the standard setting~\cite{ref:book/Gusfield97}.

However, implementing the incomplete generalized op-suffix-tree is quite difficult.
Therefore, we introduce another much simpler method to answer op-LCE queries in $O(1)$ time with $O(|S_1||S_2|)$ time preprocessing.
In a preprocessing step, our algorithm fills in the table $\opLCE{i_1}{i_2}$
for all $1 \le i_1 \le \len{S_1}$ and $1 \le i_2 \le \len{S_2}$ in $O(|S_1||S_2|)$ time.
Then, we can answer op-LCE queries in constant time.

In the preprocessing step, we use the \emph{$Z$-algorithm}~\cite{ref:book/Gusfield97,ref:journal/PRL/Hasan15}
that calculates the following table efficiently.
\begin{definition}[$Z$-table]
	The \emph{$Z$-table} $\Z{S}$ of a string $S$ is defined by
	$\Zi{S}{i} = \max\{l: \sublen{S}{1}{l} \orderisomorphic \sublen{S}{i}{l}\}$ for each $1 \le i \le |S|$.
\end{definition}
By definition, we have 
\begin{equation}
\label{eq:opLCE-Z}
\opLCE{i_1}{i_2} = \min\bigl\{\Zi{\suffix{\left(S_1 \cdot S_2\right)}{i_1}}{|S_1| - i_1 + i_2 + 1}, \ |S_1| - i_1 + 1 \bigr\}.
\end{equation}

If we use the $Z$-algorithm and Equation~(\ref{eq:opLCE-Z}) naively, it
takes $O((|S_1|+|S_2|)^2\log(|S_1|+|S_2|))$ time
to compute $\opLCE{i_1}{i_2}$ for all $1 \le i_1 \le |S_1|$ and $1 \le i_2 \le |S_2|$,
because the $Z$-algorithm requires $O(|S|\log|S|)$ time to compute $\Z{S}$ for a string $S$.
We extend the $Z$-algorithm to compute $\Z{\suffix{S}{i}}$ for \emph{all} $1 \le i \le |S|$ totally in $O(|S|^2)$ time.

In order to verify the order-isomorphism in constant time with preprocessing, 
Hasan \textit{et al.}~\cite{ref:journal/PRL/Hasan15} used tables called $\Prev{S}$ and $\Next{S}$.
For a string $S$ where all the characters are distinct\footnotemark[5],
$\Prev{S}$ and $\Next{S}$ are defined as 
\begin{align*}
	&\text{$\Prev{S}[i] = j$ if $\text{there exists } j = \argmax_{1 \le k < i}\{S[k]: S[k] < S[i] \}$, and}& \text{$-\infty$ otherwise}  \\
	&\text{$\Next{S}[i] = j$ if $\text{there exists } j = \argmin_{1 \le k < i}\{S[k]: S[k] > S[i] \}$, and}& \text{$\infty$ otherwise}
\end{align*}
for all $1 \le i \le |S|$.
Their algorithm requires $O(|S|\log|S|)$ time to compute the tables $\Prev{S}$ and $\Next{S}$,
and all operations except computing the tables take only $O(|S|)$ time.
Therefore, if we can compute tables $\Prev{\suffix{S}{i}}$ and $\Next{\suffix{S}{i}}$
for each $1 \le i \le |S|$ in $O(|S|)$ time with $O(|S|\log|S|)$ time preprocessing,
$\Z{\suffix{S}{i}}$ for all $1 \le i \le |S|$ can be computed in $O(|S|^2)$ time.
We also assume that all the characters in $S$ are distinct\footnotemark[5].

\footnotetext[5]{
	Hasan \textit{et al.}~\cite{ref:journal/PRL/Hasan15} assume that characters in a string are distinct.
	If the assumption is false, use Lemma~4 in \cite{ref:journal/IPL/Cho15} in order to verify the order-isomorphism, 
	that is, modify line~10 of Algorithm~4 in \cite{ref:journal/PRL/Hasan15} and line~\ref{alg_line:prev} and \ref{alg_line:next} in Algorithm~\ref{alg:opLCE}.
	Note that $\mathit{Prev}$ and $\mathit{Next}$ are denoted as $\mathit{LMax}$ and $\mathit{LMin}$ in \cite{ref:journal/IPL/Cho15}, respectively,
	with slight differences.
}

In order to compute the tables $\Prev{\suffix{S}{i}}$ and $\Next{\suffix{S}{i}}$, we modify a sort-based algorithm presented in Lemma~1 in \cite{ref:journals/ipl/Kubica13}
instead of the algorithm in \cite{ref:journal/PRL/Hasan15} that uses a balanced binary search tree.
First, for computing $\Prev{S}$ (resp.\ $\Next{S}$),
we stably sort positions of $S$ with respect to their elements in ascending (resp.\ descending) order.
We can compute $\Prev{\suffix{S}{i}}$ and $\Next{\suffix{S}{i}}$ for each $1 \le i \le \len{S}$ in $O(|S|)$ time
by using the sorted tables and the stack-based algorithm presented in \cite{ref:journals/ipl/Kubica13},
ignoring all elements of the sorted tables less than $i$.

\begin{algorithm2e}[p]
\caption{The algorithm for computing op-LCE queries}
\label{alg:opLCE}

\SetKwFunction{FPreprocessing}{preprocess}
\SetKwProg{Fn}{Function}{}{}
\Fn{\FPreprocessing{$S$, $i$, $S'$, $S''$}}{
	Let $s$ and $t$ be empty stacks that support $\mathtt{push}$, $\mathtt{top}$, and $\mathtt{pop}$ operations\;
	Let $\mathit{Prev}$ and $\mathit{Next}$ be tables of size $|S| - i + 1$\;
	\For{$j \leftarrow 1$ \KwTo $|S|$}{
		\If{$S'[j] \ge i$}{
			\lWhile{$s \neq \emptyset$ and $\stTop{s}> S'[j]$}{$\stPop{s}$}
			\lIf{$s = \emptyset$}{$\mathit{Prev}[S'[j] - i + 1] \leftarrow -\infty$}\label{alg_line:prev}
			\lElse{$\mathit{Prev}[S'[j] - i + 1] \leftarrow \stTop{s} - i + 1$}			
			$\stPush{s}{S'[j]}$\;				
		}
		\If{$S''[j] \ge i$}{
			\lWhile{$t \neq \emptyset$ and $\stTop{t}> S''[j]$}{$\stPop{t}$}
			\lIf{$t = \emptyset$}{$\mathit{Next}[S''[j] - i + 1] \leftarrow \infty$}\label{alg_line:next}
			\lElse{$\mathit{Next}[S''[j] - i + 1] \leftarrow \stTop{t} - i + 1$}			
			$\stPush{t}{S''[j]}$\;				
		}
	}
	\KwRet{$(\mathit{Prev}, \mathit{Next})$}\;
}

\SetKwFunction{FZ}{Z-function}
\SetKwProg{Fn}{Function}{}{}

\Fn{\FZ{$S, i_1, S', S''$}}{
	$(\mathit{Prev}, \mathit{Next}) \leftarrow \mathtt{preprocess}(S, i_1, S', S'')$;
	$S \leftarrow \suffix{S}{i_1}$\;
	Do the same operations described in line 3-17 of Algorithm~4 in \cite{ref:journal/PRL/Hasan15}\;
	\KwRet{$Z$}\;
}

\SetKwFunction{FOPLCE}{preprocess-opLCE}
\SetKwProg{Fn}{Function}{}{}
	\Fn{\FOPLCE{$S_1, S_2$}}{
	
	Let $\mathit{opLCE}$ be a table of size $|S_1| \times |S_2|$;
	$S \leftarrow S_1 \cdot S_2$\;
	Let $S'$ and $S''$  be stably sorted positions of $S$ with respect to their elements in ascending and descending order, respectively\;
	
	\For{$i_1 \leftarrow 1$ \KwTo $|S_1|$}{
		$Z \leftarrow \mathtt{Z\mathchar`-function}(S, i_1, S', S'')$\; \label{alg_line:Z-function}
		\For{$i_2 \leftarrow 1$ \KwTo $|S_2|$}{
			$\opLCE{i_1}{i_2} \leftarrow \min\Bigl\{Z[|S_1| - i_1 + i_2 + 1], |S_1| - i_1 + 1 \Bigr\}$\;
		}
	}	
	\KwRet{$\mathit{opLCE}$}\;
}

\end{algorithm2e}

Algorithm~\ref{alg:opLCE} shows the pseudocode of the op-LCE algorithm based on the $Z$-algorithm.
The $\mathtt{push}(x)$ operation inserts $x$ on the top of the stack, 
$\mathtt{top}()$ returns the top element in the stack,
and $\mathtt{pop}()$ removes it.
Algorithm~\ref{alg:opLCE} takes $O(\len{S_1}\len{S_2})$ time as discussed above.
The total space complexity is $O(\len{S_1}\len{S_2})$
because the $Z$-algorithm requires linear space~\cite{ref:journal/PRL/Hasan15}, and
the table $\opL$  needs $O(|S_1||S_2|)$ space.
Hence, we have the following lemma.
\begin{lemma}\label{lemma:op-LCE-simple}
	op-LCE queries on $S_1$ and $S_2$ can be answered in $O(1)$ time and $O(|S_1||S_2|)$ space with $O(|S_1||S_2|)$ time preprocessing.
\end{lemma}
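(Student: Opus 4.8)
The plan is to establish Lemma~\ref{lemma:op-LCE-simple} by showing that Algorithm~\ref{alg:opLCE} correctly fills in the table $\opL$ within the claimed time and space bounds. The correctness part reduces to two claims: first, that Equation~(\ref{eq:opLCE-Z}) is valid, so that each entry $\opLCE{i_1}{i_2}$ is obtained as the stated $\min$ of a $Z$-table value on a suffix of $S_1 \cdot S_2$ and the truncation length $|S_1| - i_1 + 1$; and second, that the function $\mathtt{Z\mathchar`-function}(S, i_1, S', S'')$ correctly computes $\Z{\suffix{S}{i_1}}$. Equation~(\ref{eq:opLCE-Z}) itself I would justify directly from the definitions: $\sublen{S_1}{i_1}{l} \orderisomorphic \sublen{S_2}{i_2}{l}$ holds iff the length-$l$ prefix of $\suffix{(S_1\cdot S_2)}{i_1}$ is order-isomorphic to the substring of $S_1\cdot S_2$ starting at position $|S_1| - i_1 + i_2 + 1$, provided $l$ does not exceed $|S_1| - i_1 + 1$ (so that the first substring stays inside $S_1$); the $\min$ accounts for this boundary.

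For the $\mathtt{Z\mathchar`-function}$ correctness, I would argue in two stages. The inner computation (lines 3--17 of Algorithm~4 in~\cite{ref:journal/PRL/Hasan15}) is already known to compute the order-preserving $Z$-table of its input string correctly in linear time, given correct $\Prev$ and $\Next$ tables. So the obligation is to verify that the $\mathtt{preprocess}$ subroutine produces $\Prev{\suffix{S}{i_1}}$ and $\Next{\suffix{S}{i_1}}$. Here I would invoke the sort-based characterization from Lemma~1 of~\cite{ref:journals/ipl/Kubica13}: scanning the positions of $S$ in ascending order of value and maintaining a stack that, at the time we process position $p$, holds exactly the previously-seen positions whose value is smaller than $S[p]$ and that are ``visible'' from $p$ (not dominated by a later, larger position), the top of the stack after popping everything exceeding $p$ is precisely $\Prev{S}[p]$. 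The only modification in Algorithm~\ref{alg:opLCE} is the guard $S'[j] \ge i$ (respectively $S''[j] \ge i$), which discards all positions $< i_1$: since those positions lie outside the suffix $\suffix{S}{i_1}$ and are irrelevant to $\Prev$ and $\Next$ within that suffix, skipping them yields exactly $\Prev{\suffix{S}{i_1}}$ and $\Next{\suffix{S}{i_1}}$, re-indexed by the shift $-i_1+1$. I would also note the standard subtlety that $S'$ is \emph{stably} sorted so that equal values (absent under the distinctness assumption, but handled via footnote~5) are processed in left-to-right order.

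The time analysis is then routine: $S'$ and $S''$ are computed once by a single stable sort of $|S_1|+|S_2|$ positions. For each of the $|S_1|$ choices of $i_1$, the call to $\mathtt{Z\mathchar`-function}$ runs $\mathtt{preprocess}$ in $O(|S_1|+|S_2|)$ time (one pass over $S'$ and $S''$; each position is pushed and popped at most once from each stack, so the amortized cost of the inner \textbf{while} loops is linear) and then the Hasan~\etal{} inner routine in $O(|S_1|+|S_2|)$ time, after which the inner $i_2$-loop writes $|S_2|$ entries of $\opL$ in $O(|S_2|)$ time. Summing over $i_1$ gives $O(|S_1|(|S_1|+|S_2|)) = O(|S_1||S_2|)$, using $|S_1| \le |S_1|+|S_2|$ and that both factors are $\Theta$ of the dimensions being filled; more carefully, since $\opL$ has $|S_1||S_2|$ entries and $n \ge m$ throughout the paper, $O(|S_1|(|S_1|+|S_2|))$ is $O(|S_1||S_2|)$ whenever $|S_2| = \Omega(|S_1|)$, which is the regime in which the lemma is applied. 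The space bound follows since $S$, $S'$, $S''$, the stacks, and each freshly allocated $Z$, $\Prev$, $\Next$ are all $O(|S_1|+|S_2|)$, while $\opL$ dominates at $O(|S_1||S_2|)$.

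The main obstacle I anticipate is the bookkeeping in the correctness of $\mathtt{preprocess}$: one must carefully state and prove the stack invariant (after processing the first $j$ entries of $S'$ restricted to positions $\ge i_1$, the stack contains, from bottom to top, an increasing-by-position, increasing-by-value chain of exactly those positions not yet dominated), and check that the index translation $S'[j] \mapsto S'[j] - i_1 + 1$ is consistent between the $\Prev$/$\Next$ output and the string $\suffix{S}{i_1}$ on which the inner $Z$ routine operates. A secondary subtlety worth a remark is the interaction with repeated characters via footnote~5, since the sort-based argument and the $Z$-routine's order test must use the $\mathit{LMax}/\mathit{LMin}$-style comparison of~\cite{ref:journal/IPL/Cho15} consistently; I would handle this by stating the distinctness assumption up front and deferring the general case to the footnote, exactly as the surrounding text does.
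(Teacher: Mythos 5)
Your proposal is correct and follows essentially the same route as the paper: justify Equation~(\ref{eq:opLCE-Z}), compute $\Z{\suffix{S}{i_1}}$ for every $i_1$ by feeding the Hasan~\etal{} routine with $\Prev{\suffix{S}{i_1}}$ and $\Next{\suffix{S}{i_1}}$ obtained from one global stable sort plus the Kubica~\etal{} stack scan restricted to positions $\ge i_1$, and then tabulate $\opL$. Your explicit remark that $O(|S_1|(|S_1|+|S_2|)) = O(|S_1||S_2|)$ only when $|S_2| = \Omega(|S_1|)$ is a point the paper leaves implicit (it holds in the application since $S_1 = \reverse{X}$, $S_2 = \reverse{Y}$ and $n \ge m$), but otherwise the argument is the same.
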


\begin{algorithm2e}[t]
\caption{The algorithm for the {\opproblemabbr} problem}
\label{alg:OPLCS}
\KwIn{A string $X$ of length $m$, a string $Y$ of length $n$, and an integer $k$}
\KwOut{The length of an {\opproblemabbr} between $X$ and $Y$}

Let $\co$ be a table of size $(m+1) \times (n+1)$ initialized by $0$\;
Let $R_i$ for $-n+k \le i \le m-k$ be semi-dynamic RMQ data structures\;

$\opL \leftarrow \mathtt{preprocess\mathchar`-opLCE}(\reverse{X}, \reverse{Y})$\;
\For{$i \leftarrow 0$ \KwTo $m-k$} {
	\lIf{$i < k$}{$n' \leftarrow n-k$}
	\lElse{$n' \leftarrow k-1$}	
	\lFor{$j \leftarrow 0$ \KwTo $n'$}{
		$R_{i-j}.\mathtt{prepend}(\cop{i}{j} - \min\{i, j\})$%
	}
}

\For{$i \leftarrow k$ \KwTo $m$}{ 
    \For{$j \leftarrow k$ \KwTo $n$}{
		$l \leftarrow \opLCE{m -i + 1}{n - j + 1}$\;
		\lIf{$l \ge k $}{		
			$M \leftarrow R_{i-j}.\mathtt{rmq}(k, l) + \min\{i, j \}$%
		}
		\lElse{
			$M \leftarrow 0$%
		}

        $\cop{i}{j} \leftarrow \max\{\cop{i}{j-1}, \cop{i-1}{j}, M \}$\;
        $R_{i-j}.\mathtt{prepend}(\cop{i}{j} - \min\{i, j\})$\;
    }
}
\KwRet{$\cop{m}{n}$}\;
\end{algorithm2e}

Let $\opLq{i}{j}$ be the answer to the op-LCE query on $\reverse{X}$ and $\reverse{Y}$ with respect to the index pair $(i, j)$.
We consider how to find the maximum value of $\cop{i-l}{j-l} + l$ for $k \le l\le \opLq{m-i+1}{n-j+1}$ in constant time.
We use a \emph{semi-dynamic range maximum query~(RMQ)} data structure that maintains a table $A$ and supports the following two operations:
\begin{itemize}
	\item[\small$\bullet$] $\mathtt{prepend}(x)$: add $x$ to the beginning of $A$ in $O(1)$ amortized time.
	\item[\small$\bullet$] $\mathtt{rmq}(i_1, i_2)$: return the maximum value of $\substr{A}{i_1}{i_2}$ in $O(1)$ time.
\end{itemize}
The details of the semi-dynamic RMQ data structure will be given in Section~\ref{sec:rmq}.

By using the semi-dynamic RMQ data structures and the following obvious lemma, we can find
$\max_{k \le l \le \opLq{m-i+1}{n-j+1}}\{\cop{i-l}{j-l} + l\}$
for all $1 \le i \le m$ and $1 \le j \le n$ in totally $O(mn)$ time.
\begin{lemma} \label{lem:RMQ}
	We may assume that $i \ge j$ without loss of generality.
	Let $A[l] = C[i-l, j-l] + l$ and $A'[l] = C[i-l, j-l]-j+l$ for each $1 \le l \le j$.	
	For any $1 \le i_1, i_2 \le |A|$, we have $\max_{i_1 \le l \le i_2}{A[l]}  = (\max_{i_1 \le l \le i_2}{A'[l]}) + j$
	and $\argmax_{i_1 \le l \le i_2}{A[l]}  = \argmax_{i_1 \le l \le i_2}{A'[l]}$.
\end{lemma}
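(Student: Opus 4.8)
The plan is to observe that $A$ and $A'$ differ by an additive shift that is constant along each anti-diagonal of $C$, so taking a maximum commutes with the shift and the location of the maximum is unchanged. Concretely, since we assume $i \ge j$, the valid lengths are $1 \le l \le j$, and for each such $l$ we have $A[l] = C[i-l,j-l] + l$ while $A'[l] = C[i-l,j-l] - j + l$. Thus $A[l] - A'[l] = j$ for every $l$ in the common index range; the difference does not depend on $l$.

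First I would fix $i_1, i_2$ with $1 \le i_1 \le i_2 \le |A|$ and write $A[l] = A'[l] + j$ for all $l$ with $i_1 \le l \le i_2$. Then $\max_{i_1 \le l \le i_2} A[l] = \max_{i_1 \le l \le i_2}(A'[l] + j) = \bigl(\max_{i_1 \le l \le i_2} A'[l]\bigr) + j$, using only that adding a constant commutes with $\max$ over a finite nonempty set. For the $\argmax$ claim, I would note that $l^\star$ maximizes $A$ over $[i_1,i_2]$ iff $A[l^\star] \ge A[l]$ for all $l$ in the range, iff $A'[l^\star] + j \ge A'[l] + j$ for all such $l$, iff $A'[l^\star] \ge A'[l]$ for all such $l$, i.e.\ iff $l^\star$ maximizes $A'$ over $[i_1,i_2]$; hence the two argmax sets coincide (and in particular agree if a canonical tie-breaking rule, such as smallest index, is adopted consistently).

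There is essentially no obstacle here: the only thing to be careful about is that both $A$ and $A'$ are defined on the same index set $\{1,\dots,j\}$ and that the query range $[i_1,i_2]$ lies within $|A| = j$, so that $A[l] - A'[l] = j$ really is a constant on the whole range being maximized — this is exactly why the lemma restricts to $1 \le l \le j$ and why we may assume $i \ge j$ without loss of generality (the symmetric case $j \ge i$ is handled by swapping the roles of $i$ and $j$ and using $A'[l] = C[i-l,j-l] - i + l$). The point of the lemma, to be exploited in the algorithm, is that $A'[l]$ equals $C[i-l,j-l] - \min\{i,j\} + l = \bigl(C[i-l,j-l] - \min\{i-l, j-l\}\bigr)$, which is precisely the value $\mathtt{prepend}$ed into the RMQ structure $R_{i-j}$ indexed by the anti-diagonal $i-j$; so a single RMQ on $R_{i-j}$ over the range $[k, l]$, plus one addition of $\min\{i,j\}$, yields $\max_{k \le l' \le l}\{C[i-l',j-l'] + l'\}$ in constant time.
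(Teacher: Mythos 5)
Your proof is correct and is exactly the argument the paper has in mind: the paper states this lemma without proof (calling it ``obvious''), and the intended justification is precisely that $A[l]-A'[l]=j$ is constant on the query range, so $\max$ commutes with the shift and the maximizing indices coincide. Your closing remark correctly identifies why the lemma matters for Algorithm~2, namely that $A'[l]$ equals the anti-diagonal-normalized value $C[i-l,j-l]-\min\{i-l,j-l\}$ stored in $R_{i-j}$.
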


Algorithm~\ref{alg:OPLCS} shows our algorithm to compute {\opproblemabbr}.
An example of computing {\opproblemabbrk{2}}
is shown in Fig.~\ref{fig:examples}\subref{fig:OPLCS/example}.
As discussed above, the algorithm runs in $O(mn)$ time.
Each semi-dynamic RMQ data structure requires linear space
and a total of $O(mn)$ elements are maintained by the semi-dynamic RMQ data structures.
Therefore, the total space of semi-dynamic RMQ data structures is $O(mn)$.
Consequently, the total space complexity is $O(mn)$. Hence, we have the following theorem.
\begin{theorem}
	The {\opproblemabbr} problem can be solved in $O(mn)$ time and space.
\end{theorem}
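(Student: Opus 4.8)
The plan is to prove the theorem by establishing, in this order, (i) that Algorithm~\ref{alg:OPLCS} returns $\cop{m}{n}$ correctly, (ii) that it runs in $O(mn)$ time, and (iii) that it uses $O(mn)$ space; essentially all of the work is in (i). As already observed in the text, Equation~(\ref{eq:C}) remains valid when the matching relation is order-isomorphism, so it suffices to show that the algorithm evaluates $\cop{i}{j}=\max(\{\cop{i}{j-1},\cop{i-1}{j}\}\cup\A)$ correctly for every $k\le i\le m$, $k\le j\le n$, in the order in which the cells are visited.

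For that I would first record two elementary facts. Writing $L[i,j]=\max\{l:\sublastlen{X}{i}{l}\orderisomorphic\sublastlen{Y}{j}{l}\}$, we have $\match{i}{j}{l}=1$ exactly when $l\le L[i,j]$, because a prefix of an order-isomorphic pair of strings is again order-isomorphic (apply the definition with the index range restricted). Second, $\cop{}{}$ is nondecreasing along diagonals and in each coordinate, so $\cop{i-l}{j-l}\le\cop{i-1}{j}$ for $l\ge1$; hence every element of $\A$ with $l>L[i,j]$ equals $\cop{i-l}{j-l}$ and is dominated, giving
\[
\cop{i}{j}=\max\Bigl(\{\cop{i}{j-1},\cop{i-1}{j}\}\cup\{\cop{i-l}{j-l}+l:k\le l\le L[i,j]\}\Bigr),
\]
with the last set empty (and the algorithm's $M=0$) when $L[i,j]<k$, which is harmless since all entries of $\cop{}{}$ are nonnegative. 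Finally, reversal preserves order-isomorphism and $\sublastlen{X}{i}{l}=\sublen{\reverse{X}}{m-i+1}{l}$, so $L[i,j]=\opLq{m-i+1}{n-j+1}$ on $\reverse{X},\reverse{Y}$; by Lemma~\ref{lemma:op-LCE-simple} this value is available in $O(1)$ time after the single $O(mn)$-time preprocessing at the start, and moreover $L[i,j]\le\min\{i,j\}$.

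It then remains to show that $R_{i-j}.\mathtt{rmq}(k,L[i,j])+\min\{i,j\}$ equals $\max_{k\le l\le L[i,j]}\{\cop{i-l}{j-l}+l\}$. I would maintain the invariant that, when cell $(i,j)$ is processed, $R_{i-j}$ stores the values $\cop{i'}{j'}-\min\{i',j'\}$ for exactly the cells $(i',j')$ strictly above $(i,j)$ on the diagonal $i'-j'=i-j$, arranged so that its $l$th entry is the one for $(i-l,j-l)$; this gives $\min\{i,j\}$ entries, so every index a query reads (up to $L[i,j]\le\min\{i,j\}$) is present. The delicate point is that the initialization loop must seed each diagonal with exactly its low-coordinate cells (those with $\min\{i',j'\}<k$, where $\cop{i'}{j'}=0$), in increasing order of $i'$, so that concatenated with the cells the main loop prepends along that diagonal the invariant holds throughout. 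Given the invariant, $\min\{i-l,j-l\}=\min\{i,j\}-l$ shows the $l$th entry equals $\cop{i-l}{j-l}+l-\min\{i,j\}$, so $R_{i-j}.\mathtt{rmq}(k,L[i,j])+\min\{i,j\}$ is the desired maximum --- this is exactly Lemma~\ref{lem:RMQ}. An induction on the processing order of $(i,j)$, using that a query for $(i,j)$ reads only cells $(i,j-1)$, $(i-1,j)$, and $(i-l,j-l)$ with $l\ge k$, all of which precede $(i,j)$ and hence are already correct, completes the proof that $\cop{m}{n}$ is returned correctly.

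For the complexity: the op-LCE preprocessing on $\reverse{X},\reverse{Y}$ costs $O(|\reverse{X}|\,|\reverse{Y}|)=O(mn)$ time and space (Lemma~\ref{lemma:op-LCE-simple}); the initialization and main loops together touch $O(mn)$ cells, each paying one $O(1)$ op-LCE lookup, one $O(1)$-time $\mathtt{rmq}$, and one $O(1)$-amortized $\mathtt{prepend}$, for $O(mn)$ total time. The table $\co$ and the op-LCE table are $O(mn)$ each; there are $O(m+n)$ RMQ structures, each linear in the number of elements it holds, and across all diagonals only $O(mn)$ elements are ever prepended, so $O(mn)$ space in all. I expect the main obstacle to be the index bookkeeping of the previous paragraph: confirming that the cells seeded by the initialization loop, together with the cells the main loop prepends along each diagonal, really do make ``the $l$th entry of $R_{i-j}$'' name the cell $(i-l,j-l)$ at every query --- a careful case analysis on the diagonal index. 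The order-isomorphism facts and the arithmetic of Lemma~\ref{lem:RMQ} are comparatively immediate.
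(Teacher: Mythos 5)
Your proposal is correct and follows essentially the same route as the paper: Equation~(1) under order-isomorphism, reduction of $\A$ to the range $k\le l\le L[i,j]$ via downward-closedness of the match relation and monotonicity of $\co$, op-LCE queries on the reversed strings via Lemma~\ref{lemma:op-LCE-simple}, and the diagonal-normalized semi-dynamic RMQ of Lemma~\ref{lem:RMQ} and Section~\ref{sec:rmq}, including the seeding of each $R_{i-j}$ with the cells having $\min\{i',j'\}<k$. The only nit is that $\sublastlen{X}{i}{l}$ is the \emph{reverse} of $\sublen{\reverse{X}}{m-i+1}{l}$ rather than equal to it, but since $A\orderisomorphic B$ iff $\reverse{A}\orderisomorphic\reverse{B}$ your identification $L[i,j]=\opLq{m-i+1}{n-j+1}$ stands.
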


\section{The Semi-dynamic Range Minimum/Maximum Query}
\label{sec:rmq}

\begin{figure}[t]
	\centering
	\includegraphics[scale=0.6]{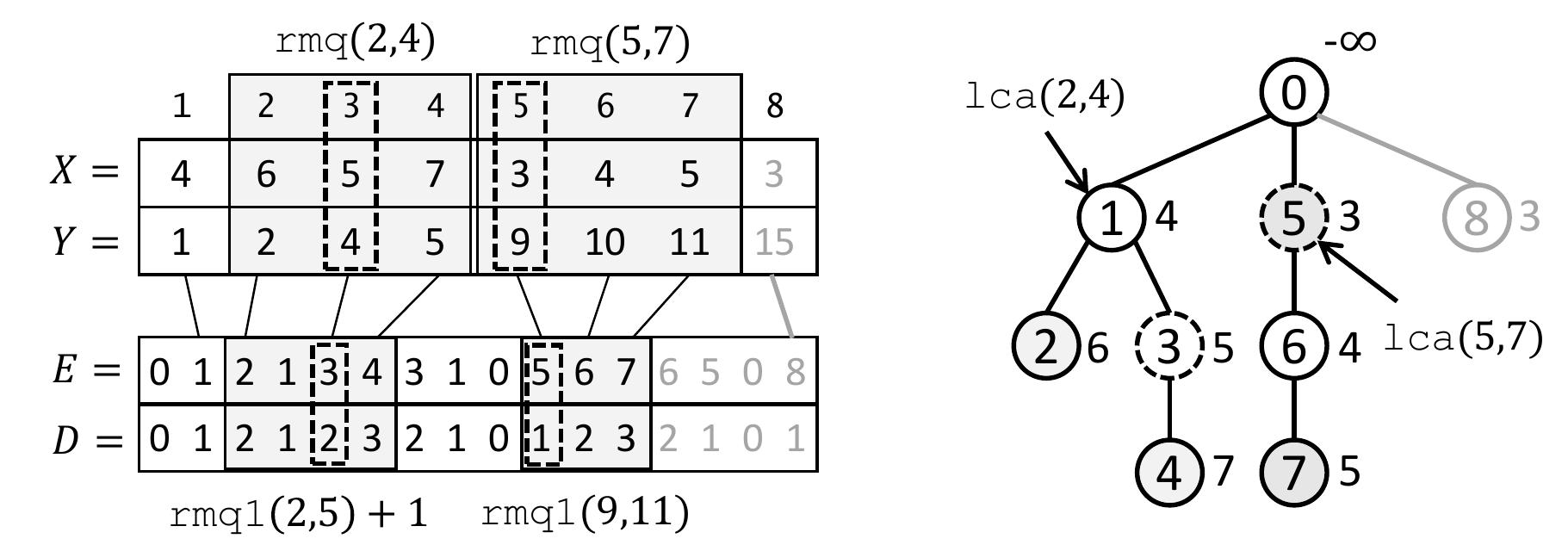}\\
	\caption{An example of searching for the RMQ by using a 2d-Min-Heap and the $\pm1$RMQ algorithm~\cite{Bender2000}. 
		The tree shows the 2d-Min-Heap of $X=(4, 6, 5, 7, 3, 4, 5, 3)$ represented by arrays $E$ and $D$.
		The gray node $8$ in the tree and gray numbers in the table are added when the last character $X[8]=3$ is processed.
		The boxes with the dashed lines show the answers of RMQs $\rmq{2}{4}$ and $\rmq{5}{7}$.}
	\label{fig:rmq}
\end{figure}

In this section we will describe the algorithm that solves the semi-dynamic RMQ problem
with $O(1)$ query time and amortized $O(1)$ prepend time.
To simplify the algorithm, we consider the prepend operation as appending a character into the end of array. 
In order to solve this problem, Fischer~\cite{ref:conf/wads/Fischer11} proposed an algorithm that uses a 2d-Min-Heap~\cite{ref:journal/SICOMP/Fischer11}
and dynamic LCAs~\cite{ref:journal/SICOMP/Cole05}.
However, the algorithm for dynamic LCAs is very complex to implement.
Therefore, we propose a simple semi-dynamic RMQ algorithm that can be implemented easily if the number of characters to be appended is known beforehand.
This algorithm uses a 2d-Min-Heap and the $\pm1$RMQ algorithm proposed by Bender and Farach-Colton~\cite{Bender2000}.

Let $X$ be a string of length $n$ and let $X[0] = -\infty$.
The 2d-Min-Heap $H$ of $X$ is an ordered tree of $n+1$ nodes $\{0,1,\cdots,n\}$, 
where $0$ is the root node, and the parent node of node $i > 0$ is $\max\{ j < i : X[j] < X[i]\}$.
Moreover, the order of the children is chosen so that they increase from left to right (see Fig.~\ref{fig:rmq} for instance).
Note that the vertices are inevitably aligned in preorder.
Actually, the tree $H$ is represented by arrays $E$ and $D$ that store the sequences of
nodes and their depths visited in an Euler tour of $H$, respectively.
In addition, let $Y$ be an array defined as 
$Y[i] = \min\{j : E[j] = i\}$ for each $1 \leq i \leq n$.

For two positions $1 \leq i_1 \leq i_2 \leq n$ in $X$,
$\rmq{i_1}{i_2}$ can be calculated by finding $\lca{i_1}{i_2}$, the LCA of the nodes $i_1$ and $i_2$ in $H$.
If $\lca{i_1}{i_2} = i_1$, then $\rmq{i_1}{i_2} = i_1$.
Otherwise, $\rmq{i_1}{i_2} = i_3$ such that $i_3$ is a child of $\lca{i_1}{i_2}$ and an ancestor of $i_2$.
The $\lca{i_1}{i_2}$ can be computed by performing the $\pm1$RMQ query $\rmqone{Y[i_1]}{Y[i_2]}$ on $D$,
because $D[j+1]-D[j] = \pm1$ for every $j$.
It is known that $\pm1$RMQs can be answered in $O(1)$ time with $O(n)$ time preprocessing~\cite{Bender2000}.
Therefore, we can calculate $\rmq{i_1}{i_2}$ as follows,
\begin{equation*} \label{eq:rmq}
	\rmq{i_1}{i_2} =
	\begin{cases}
		E[\rmqone{Y[i_1]}{Y[i_2]}] & \text{(if $E[\rmqone{Y[i_1]}{Y[i_2]}] = i_1$)}\\
		E[\rmqone{Y[i_1]}{Y[i_2]} + 1] & \text{(otherwise)}.
	\end{cases}
\end{equation*}

Fig.~\ref{fig:rmq} shows an example of calculating the RMQ.
From the property of a 2d-Min-Heap,
arrays $E$ and $D$ are always extended to the end when a new character is appended.
Moreover, the $\pm1$RMQ algorithm can be performed semi dynamically if the size of sequences is known beforehand,
or by increasing the arrays size exponentially. 
Therefore, this algorithm can be performed online and can solve the semi-dynamic RMQ problem, as we intended.
\section{Experimental Results}
In this section, we present experimental results.
We compare the running time of the proposed algorithm in Section~\ref{sec:standard} to
the existing algorithms~\cite{ref:conf/sisap/Benson13,ref:preprint/Pavetic14}.
Furthermore, we show the running time of Algorithm~\ref{alg:OPLCS}.
We used a machine running Ubuntu 14.04 with Core i7 4820K and 64GB RAM.
We implemented all algorithms in {\cpp} and compiled with gcc 4.8.4 with \texttt{-O2} optimization.
We used an implementation of the algorithm proposed by Paveti{\'c}  \emph{et al.}, 
available at \url{github.com/fpavetic/lcskpp}.
We denote the algorithm proposed by Paveti{\'c}~\emph{et al.}~\cite{ref:preprint/Pavetic14} and 
the algorithm proposed by Benson~\emph{et al.}~\cite{ref:conf/sisap/Benson13}  as {\PZS}  and BLMNS, respectively.

\begin{figure}[t]
	\begin{minipage}[t]{.5\linewidth}
		\centering
		\includegraphics[width=\linewidth,clip]{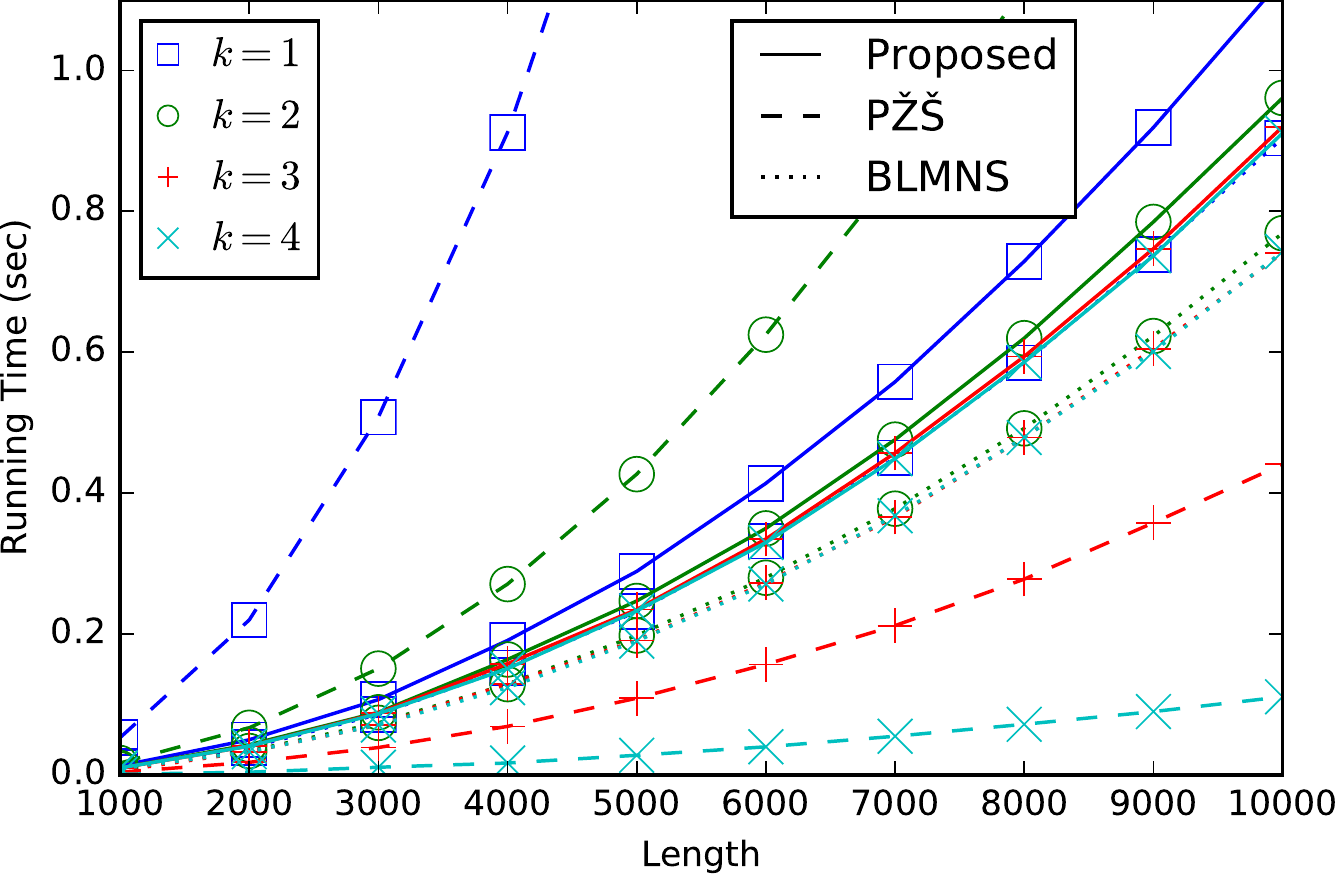}\\
		\vspace{-0.5em} 
		\subcaption{Random data; $|\Sigma|=4; k=1, 2, 3, 4$}\label{fig:result/LCSk/k}
		\vspace{0.5em} 
	\end{minipage}%
	\begin{minipage}[t]{.5\linewidth}
		\centering
		\includegraphics[width=\linewidth,clip]{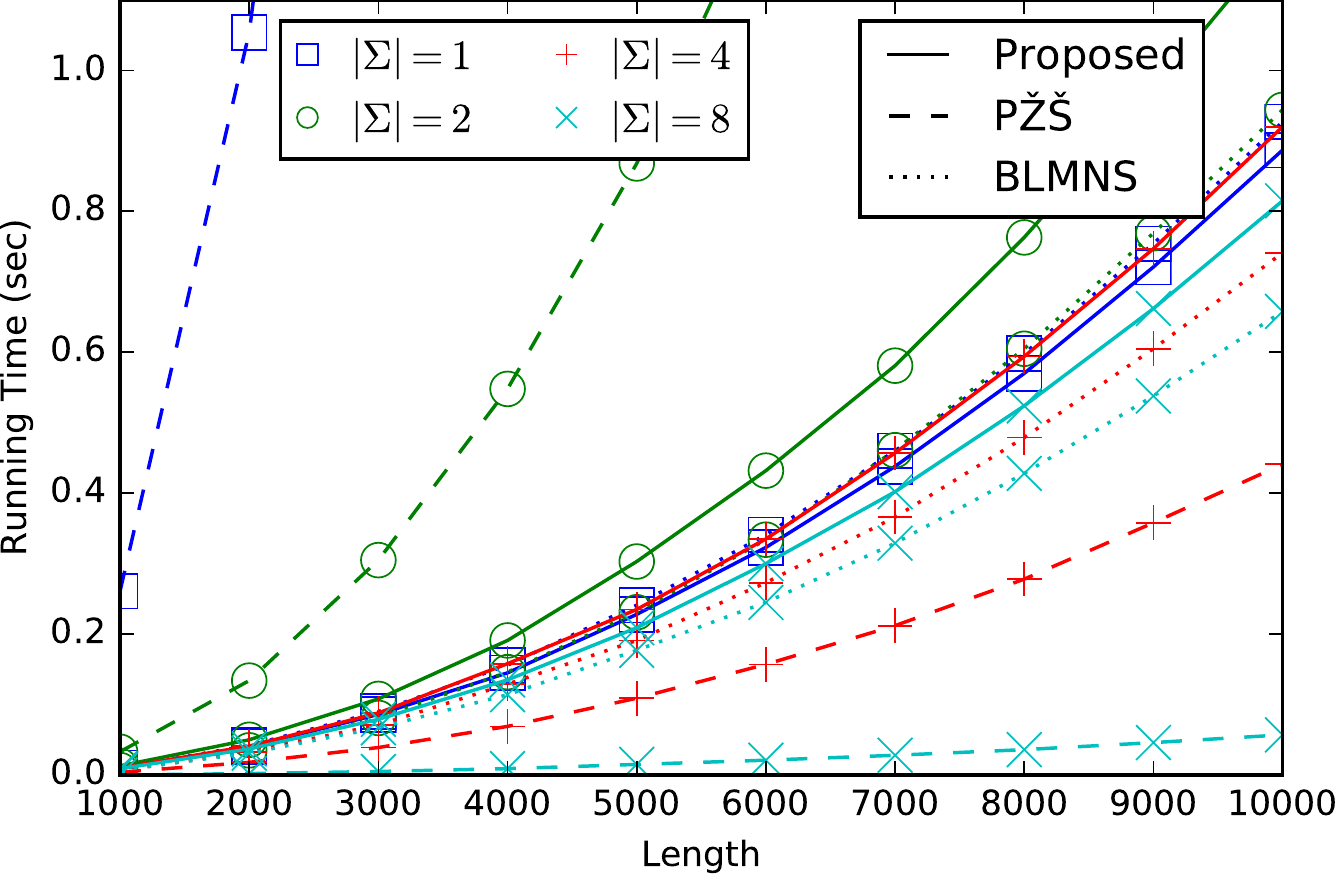}\\
		\vspace{-0.5em}
		\subcaption{Random data; $k=3; |\Sigma| = 1, 2, 4, 8$}\label{fig:result/LCSk/sigma}
		\vspace{0.5em}
	\end{minipage}\\
	\begin{minipage}[t]{.5\linewidth}
		\centering
		\includegraphics[width=.96\linewidth,clip]{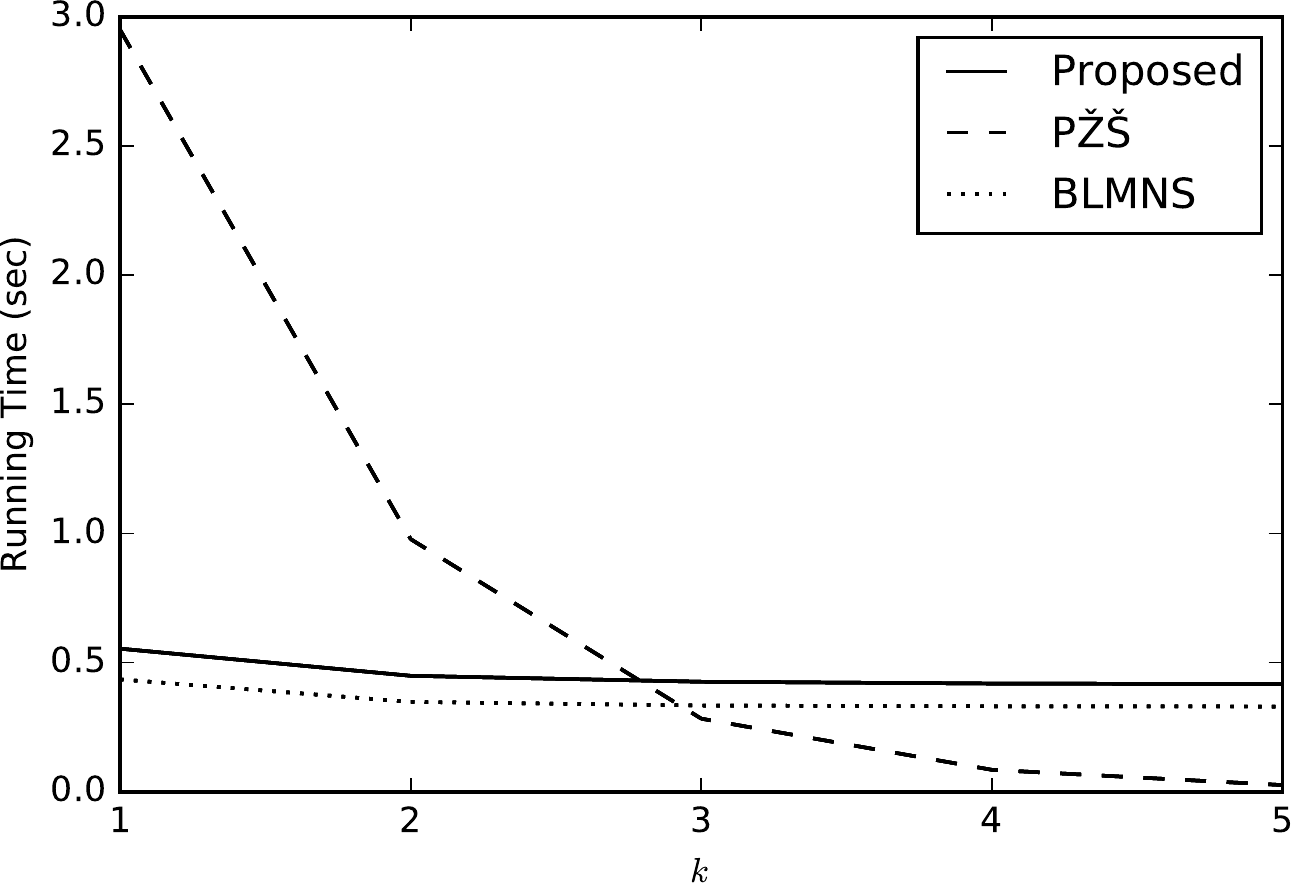}\\
		\vspace{-0.5em}
		\subcaption{DNA data}\label{fig:result/LCSk/DNA}
	\end{minipage}%
	\begin{minipage}[t]{.5\linewidth}
		\centering
		\includegraphics[width=\linewidth,clip]{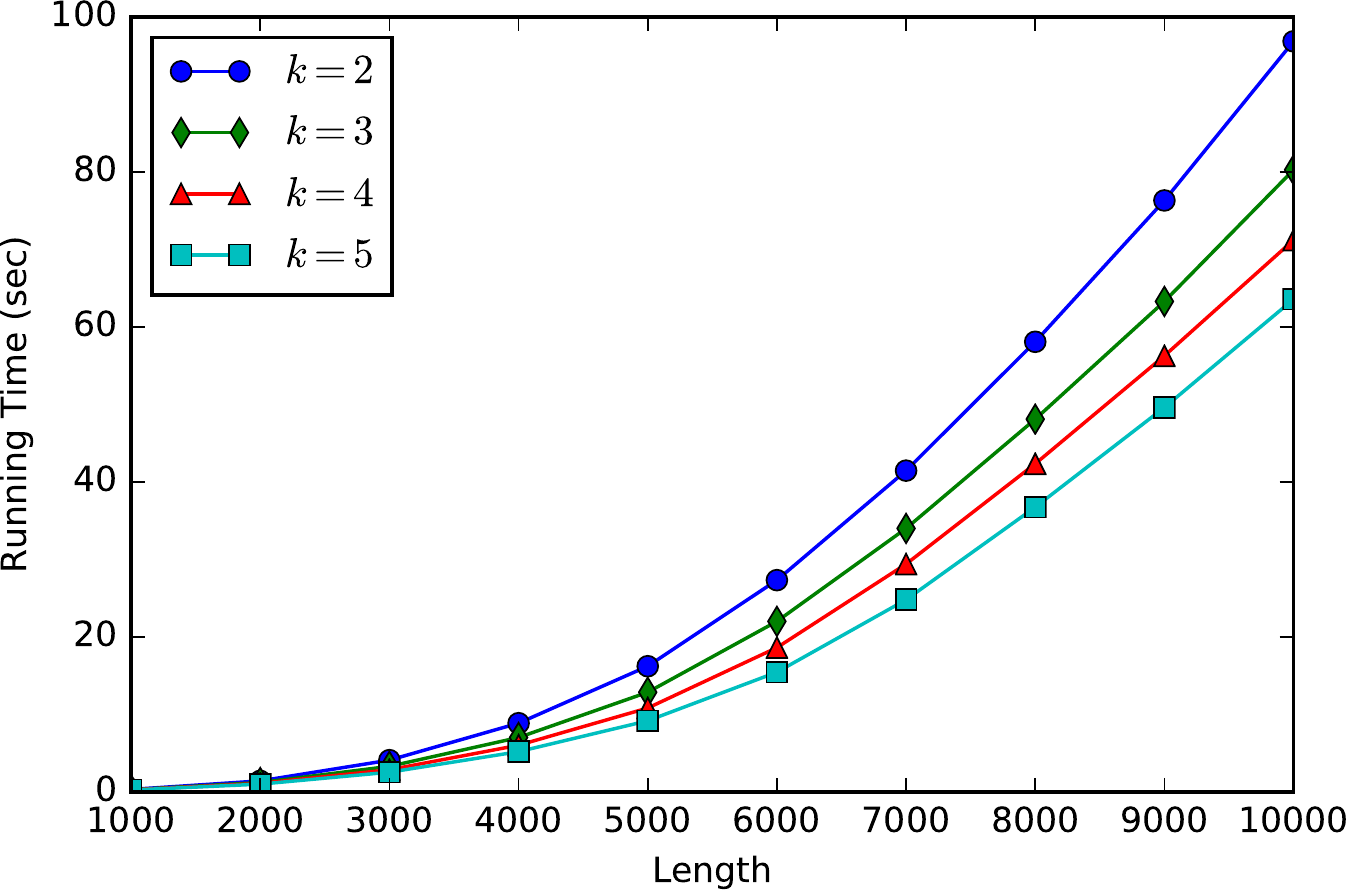}\\
		\vspace{-0.5em}
		\subcaption{Algorithm~\ref{alg:OPLCS}, random data, $|\Sigma| = 100$}\label{fig:result/OPLCS}
	\end{minipage}
	\caption{Running times of the proposed algorithm in Section~\ref{sec:standard}, {\PZS}, and BLMNS~(Figs.\ref{fig:results}\subref{fig:result/LCSk/k}, 
		\ref{fig:results}\subref{fig:result/LCSk/sigma} and \ref{fig:results}\subref{fig:result/LCSk/DNA}),
		and	Algorithm~\ref{alg:OPLCS}~(Fig.~\ref{fig:results}\subref{fig:result/OPLCS}).
		In Figs.~\ref{fig:results}\subref{fig:result/LCSk/k}, \ref{fig:results}\subref{fig:result/LCSk/sigma}, and \ref{fig:results}\subref{fig:result/LCSk/DNA},
		the line styles denote algorithms.
		The line markers in Figs.~\ref{fig:results}\subref{fig:result/LCSk/k} and \ref{fig:results}\subref{fig:result/LCSk/sigma} represent the parameter $k$ and 
		the alphabet size, respectively.
	}\label{fig:results}
\end{figure}

We tested the proposed algorithm in Section~\ref{sec:standard}, {\PZS}, and BLMNS in the following three conditions:
(1) random strings over an alphabet of size $|\Sigma| = 4$ with  $n = m = 1000, 2000, \cdots, 10000$ and $k = 1, 2, 3, 4$
(2) random strings over alphabets of size $|\Sigma|= 1, 2, 4, 8$ with $n = m = 1000, 2000, \cdots, 10000$ and $k = 3$
(3) DNA sequences that are available at \url{www.ncbi.nlm.nih.gov/nuccore/346214858} and \url{www.ncbi.nlm.nih.gov/nuccore/U38845.1},
with $k = 1, 2, 3, 4, 5$.
The experimental results under the conditions (1), (2) and (3) are shown in 
Figs.~\ref{fig:results}\subref{fig:result/LCSk/k}, \ref{fig:results}\subref{fig:result/LCSk/sigma}, and \ref{fig:results}\subref{fig:result/LCSk/DNA}, respectively.

The proposed algorithm in Section~\ref{sec:standard} runs faster than {\PZS} for small $k$ or small alphabets.
This is due to that {\PZS} strongly depends on
the total number of matching $k$ length substring pairs between input strings,
and for small $k$ or small alphabets there are many matching pairs.
In general BLMNS runs faster than ours.
The proposed algorithm runs a little faster for small $k$ or small alphabets, except $|\Sigma| = 1$.
We think that this is because
for small $k$ or small alphabets the probability that $L[i, j] \ge k$ is high,
and this implies that we need more operations to compute $M[i, j]$ by definition.
In Fig.~\ref{fig:results}\subref{fig:result/LCSk/sigma}, it is observed that the proposed algorithm with $|\Sigma| = 1$ runs faster
than with $|\Sigma| = 2$.
Since $|\Sigma| = 1$ implies that $X = Y$ if $X$ and $Y$ have the same length,
$L[i, j] > k$ almost always holds,
which leads to reduce branch mispredictions and speed up execution.

We show the running time of Algorithm~\ref{alg:OPLCS} in Fig.~\ref{fig:results}\subref{fig:result/OPLCS}.
We tested Algorithm~\ref{alg:OPLCS} on random strings over $\Sigma = \{1, 2, \cdots, 100\}$ with $n=m=1000, 2000, \cdots, 10000$ and $k = 2, 3, 4, 5$.
It is observed that the algorithm runs faster as the parameter $k$ is smaller.
We suppose that the hidden constant of the RMQ data structure described in Section~\ref{sec:rmq} is large.
Therefore, the running time of Algorithm~\ref{alg:OPLCS} depends on 
the number of times the \texttt{rmq} operation is called,
and for small $k$ the number of them increases since the probability that $l \ge k$ is high.

\section{Conclusion}\label{sec:conclusion}
We showed that both the {\problemabbr} problem and the {\opproblemabbr} problem can be solved in $O(mn)$ time.
Our result on the {\problemabbr} problem gives a better worst-case running time than previous algorithms~\cite{ref:conf/sisap/Benson13,ref:preprint/Pavetic14},
while the experimental results showed that the previous algorithms run faster than ours on average.
Although the {\opproblemabbr} problem looks much more challenging than the {\problemabbr},
since the former cannot be solved by a simple dynamic programming due to the properties of order-isomorphisms,
the proposed algorithm achieves the same time complexity as the one for the {\problemabbr}.

\subsubsection*{Acknowledgements.}
This work was funded by ImPACT Program of Council for Science, Technology and 
Innovation (Cabinet Office, Government of Japan), 
Tohoku University Division for Interdisciplinary Advance Research and Education,
and JSPS KAKENHI Grant Numbers JP24106010, JP16H02783, JP26280003.

\bibliographystyle{abbrv}
\bibliography{ref}

\clearpage
\appendix
\section*{Appendix}
Springer's version (\url{http://dx.doi.org/10.1007/978-3-319-51963-0_28}) contains some crucial typos
which were inserted during Springer's typesetting process.

\begin{itemize}
	\item p.~364, Eq (4): \mbox{}\\
		\begin{equation*}
			\opLCE{i_1}{i_2} = \min\bigl\{\Zi{{\left(S_1 \cdot S_2\right)}{i_1}}{|S_1| - i_1 + i_2 + 1}, \ |S_1| - i_1 + 1 \bigr\}
		\end{equation*}
	should be replaced with
		\begin{equation*}
			\opLCE{i_1}{i_2} = \min\bigl\{\Zi{\suffix{\left(S_1 \cdot S_2\right)}{i_1}}{|S_1| - i_1 + i_2 + 1}, \ |S_1| - i_1 + 1 \bigr\}
		\end{equation*}
	
	\item p.~368, paragraph 5, line 8: \mbox{}\\
	\begin{quote}
		\vspace{-3ex}
		Therefore, if we can compute tables $\Prev{{S}{i}}$ and $\Next{{S}{i}}$
	\end{quote}
	should be replaced with
	\begin{quote}
		Therefore, if we can compute tables $\Prev{\suffix{S}{i}}$ and $\Next{\suffix{S}{i}}$		
	\end{quote}

	\item p.~368, paragraph 5, line 9: \mbox{}\\
	\begin{quote}
		\vspace{-3ex}
		with $O(|S|\log|S|)$ time preprocessing, $\Z{{S}{i}}$ for		
	\end{quote}
	should be replaced with
	\begin{quote}
		with $O(|S|\log|S|)$ time preprocessing, $\Z{\suffix{S}{i}}$ for		
	\end{quote}
	
	\item p.~369, paragraph 1, line 1: \mbox{}\\
	\begin{quote}
		\vspace{-3ex}
		In order to compute the tables $\Prev{{S}{i}}$ and $\Next{{S}{i}}$
	\end{quote}
	should be replaced with
	\begin{quote}
		In order to compute the tables $\Prev{\suffix{S}{i}}$ and $\Next{\suffix{S}{i}}$
	\end{quote}

	\item p.~369, paragraph 1, line 5: \mbox{}\\
	\begin{quote}
		\vspace{-3ex}
		We can compute $\Prev{{S}{i}}$ and $\Next{{S}{i}}$ for each $1 \le i \le \len{S}$
	\end{quote}
	should be replaced with
	\begin{quote}
		We can compute $\Prev{\suffix{S}{i}}$ and $\Next{\suffix{S}{i}}$ for each $1 \le i \le \len{S}$
	\end{quote}
	
\end{itemize}

\end{document}